\DeclareMathAlphabet{\mathbbold}{U}{bbold}{m}{n}                    
\newcommand{\zero}{\mathbbold{0}}
\newcommand{\unit}{\mathbbold{1}}
\begin{document}
\newcommand{\scap}{\operatornamewithlimits{\sqcap}}
\newcommand{\scup}{\operatornamewithlimits{\sqcup}}
\newcommand{\floor}[1]{\left\lfloor#1\right\rfloor}
\newcommand{\ceil}[1]{\left\lceil#1\right\rceil}
\newcommand{\nin}{\mrm{\it in}}
\newcommand{\nout}{\mrm{\it out}}
\newcommand{\comment}[1]{{\footnotesize\tt #1}}
\newtheorem{theorem}{Theorem}
\newtheorem{proposition}{Proposition}
\newtheorem{lemma}{Lemma}
\newtheorem{corollary}{Corollary}
\newtheorem{definition}{Definition}
\def\<#1,#2>{\langle #1,#2\rangle}
\newcommand{\cala}{\mathscr{A}}
\newcommand{\calc}{\mathscr{C}}
\newcommand{\calp}{\mathscr{P}}
\newcommand{\cals}{\mathscr{S}}
\newcommand{\ev}[2]{[#2]_{#1}}
\newcommand {\cqfd}     {\hfill $\Box$}
\newcommand{\set}[2]{\{#1\mid\,#2\}}
\newcommand{\mrm}[1]{\text{\rm #1}}
\newcommand{\lcm}{\operatorname{lcm}}
\newcommand{\ov}[1]{\overline{#1}}
\newcommand{\Q}{\mathbb{Q}}
\newcommand{\qbar}{\Q\cup\{-\infty\}}
\newcommand{\R}{\mathbb{R}}
\newcommand{\Z}{\mathbb{Z}}
\newcommand{\N}{\mathbb{N}}
\newcommand{\B}{\mathbb{B}}
\newcommand{\qmax}{\Q_{\max}}
\def\rbar{$\rm \bf \bar{R}$ }
\newcommand{\sfA}{\mathsf{A}}
\newcommand{\sfK}{\mathsf{K}}
\newcommand{\sfa}{\mathsf{a}}
\newcommand{\sfb}{\mathsf{b}}
\newcommand{\sfc}{\mathsf{c}}
\newcommand{\sfd}{\mathsf{d}}
\newcommand{\sfp}{\mathsf{p}}
\newcommand{\sfP}{\mathsf{P}}
\newcommand{\sfr}{\mathsf{r}}
\newcommand{\sfu}{\mathsf{u}}
\newcommand{\sfv}{\mathsf{v}}
\newcommand{\sfq}{\mathsf{q}}
\newcommand{\sfs}{\mathsf{s}}
\newcommand{\sfS}{\mathsf{S}}
\newcommand{\sfT}{\mathsf{T}}
\newcommand{\sfU}{\mathsf{U}}
\newcommand{\sfV}{\mathsf{V}}
\newcommand{\sft}{\mathsf{t}}
\newcommand{\sfy}{\mathsf{y}}
\newcommand{\calo}{\mathcal{O}}

\author[cesame]{Vincent Blondel \fnref{fn1}}
\ead{blondel@inma.ucl.ac.be}
\author[ensta]{St\'ephane Gaubert \fnref{fn1}}
\ead{Stephane.Gaubert@inria.fr}
\author[ensl]{Natacha Portier \fnref{fn1,fn2}}
\ead{Natacha.Portier@ens-lyon.fr}

\fntext[fn1]{This work was partly supported by a grant
{\sc Tournesol} (Programme de coop\'eration scientifique entre la France
et la communaut\'e Fran\c{c}aise de Belgique), 
and by the European Community Framework IV program through the
research network ALAPEDES (``The Algebraic Approach to Performance
Evaluation of Discrete Event Systems'').} 

\fntext[fn2]{This work was partially funded by European Community under contract PIOF-GA-2009-236197 of the 7th PCRD.}

\address[cesame]{Large Graphs and Networks, D\'epartement d'ing\'enierie math\'ematique, Universit\'e
catholique de Louvain, 4 Avenue Georges Lema\^\i tre, 
B-1348 Louvain-la-Neuve,
Belgique}
\address[ensta]{INRIA and CMAP, Ecole Polytechnique,
91128 Palaiseau Cedex, France.}
\address[ensl]{LIP, UMR 5668, ENS de Lyon -- cnrs -- UCBL -- INRIA,
\'Ecole Normale Sup\'erieure de Lyon, Universit\'e de Lyon, 46, all\'ee d'Italie,
69364 Lyon cedex 07, France
and Department of Computer Science, University of Toronto, Canada.}

\title{The set of realizations of a max-plus linear sequence
is semi-polyhedral}

\begin{abstract}
We show that the set of realizations of a given dimension
of a max-plus linear sequence is a finite union
of polyhedral sets, which can be computed
from any realization of the sequence. 
This yields an (expensive) algorithm
to solve the max-plus minimal realization
problem. These results are derived
from general facts on rational expressions
over idempotent commutative semirings:
we show more generally that the set of
values of the coefficients of a commutative rational expression
in one letter that yield a given max-plus linear sequence
is a finite union of polyhedral sets.
\end{abstract}

\maketitle

\vspace{2em}

\begin{center}
\framebox[1.1\width]{Research Report RRLIP 2010-33}
\end{center}

\vspace{2em}

\section{Introduction and Statement of Results}
A {\em realization} of a sequence $S_0,S_1,\ldots$ of elements
of a semiring $K$ is a triple $(c,A,b)$,
where $c\in K^{1\times N}$, $A\in K^{N\times N},
b\in K^{N\times 1}$, and $S_0=cb$, $S_1=cAb$, 
$S_2=cA^2b$, \ldots
The integer $N$ is the {\em dimension}
of the realization. A sequence $S$ is $K$-{\em recognizable}
(or $K$-linear) if it has a realization $(c,A,b)$, and
then, we say that $S$ is {\em recognized} by $(c,A,b)$.

In this paper, we consider the max-plus semiring $K=\qmax$,
which is the set $\qbar$, equipped
with the addition $(a,b)\mapsto a\oplus b=\max(a,b)$
and the multiplication
$(a,b)\mapsto a\otimes b=a+b$. 
We address the following
realization problem, 
which was raised as an open problem in several works~\cite{cohen85b-inria,olsder86b,BICOQ,olsder}:
{\em does a $\qmax$-recognizable 
sequence have a realization of a given dimension?}

As observed by the first and third authors~\cite{Blo99},
it follows from an old result of Stockmeyer and
Meyer~\cite{meyer} that this problem is co-NP-hard.
In this paper, we show that it is decidable and we show how one can
effectively construct the set of realizations.
Our results are also valid for other tropical semirings~\cite{pin95},
like the semiring of max-plus integers $\Z_{\max}=
(\Z\cup\{-\infty\},\max,+)$, or the 
semiring $\N_{\min}=(\N\cup\{+\infty\},\min,+)$,
hence, it is convenient to
consider more generally a semiring $K$,
whose addition, multiplication, zero element, and unit
elements will be denoted by $\oplus,\otimes,\zero,\unit$, respectively.
We shall assume that $K$ is commutative, i.e., that $u\otimes v=v\otimes u$. We shall use the familiar
algebraic notation, with the obvious
changes (e.g., $a^2b=a\otimes a\otimes b$).

We say that a semiring
is {\em idempotent} when $u\oplus u=u$,
we say that an idempotent semiring is {\em linearly
ordered} when the relation $u\leq v\iff u\oplus v=v$
is a linear order, and that it is {\em archimedian}
if $u\lambda^k\geq v\mu^k$
for all $k\geq 0$ implies $v=\zero$ or $\lambda \geq \mu$.
Finally,
we say that $K$ is {\em cancellative}
if $uv=u'v\implies v=\zero$ or $u=u'$.

A {\em monomial} in the $n$ variables
$x_1,\ldots,x_n$ is of the form
$m(x)= u x_1^{\alpha_1}\cdots x_n^{\alpha_n}$,
for some $u\in K$ and $\alpha_1,\ldots,\alpha_n\in \N$.
We call {\em half-space}
of $K^n$ a set of the form $\set{x\in K^n}{m(x)\geq m'(x)}$,
where $m$ and $m'$ are monomials.
(In $\qmax$, a monomial can be rewritten with the
conventional notation as $m(x)= u+ \sum_{i=1}^n \alpha_i x_i $,
which accounts for the terminology ``half-space''). A {\em polyhedron}
is a finite intersection of half-spaces.
A set is {\em semi-polyhedral} if it is 
a finite union of polyhedra.

A realization of dimension
$N$, $(c,A,b)$, can be seen as an element of the
set $K^{2N+N^2}$. We will prove:

\begin{theorem}\label{th-alg}
Let $K$ denote an idempotent linearly ordered archimedian
cancellative commutative semiring. Then, the set
of realizations of dimension $n$ of a $K$-recognizable
series is a semi-polyhedral subset of $K^{2N+N^2}$,
which can be effectively constructed.
\end{theorem}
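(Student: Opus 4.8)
The plan is to fix the target series $S$ — given, say, by some known realization, so that the $S_k$ are computable — and to describe the set $\set{(c,A,b)\in K^{2N+N^2}}{c A^k b = S_k\ \text{for all}\ k\geq 0}$ explicitly as a semi-polyhedral set. The starting observation is that for each \emph{fixed} index $k$, the condition $c A^k b = S_k$ is already semi-polyhedral. Expanding the matrix product gives $c A^k b = \bigoplus_\pi m_\pi(c,A,b)$, where $\pi$ ranges over the finitely many index sequences $(i_0,\ldots,i_k)$ and $m_\pi(c,A,b)=c_{i_0}\otimes A_{i_0 i_1}\otimes\cdots\otimes A_{i_{k-1}i_k}\otimes b_{i_k}$ is a monomial in the entries. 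Treating the constant $S_k$ as a monomial as well, the inequality $c A^k b\geq S_k$ is the union of the half-spaces $\set{x}{m_\pi(x)\geq S_k}$, while $c A^k b\leq S_k$ is the intersection of the half-spaces $\set{x}{S_k\geq m_\pi(x)}$; hence $\set{x}{c A^k b=S_k}$ is a finite union of polyhedra.

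The next step is purely formal: semi-polyhedral sets are closed under finite unions by definition, and under finite intersections because distributing the intersection over the unions reduces matters to intersecting two polyhedra, which is again a finite intersection of half-spaces. Consequently the intersection of the constraints over any \emph{finite} range $0\leq k\leq K_0$ is semi-polyhedral and can be written down effectively once the $S_k$ are known.

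The crux — and the only place where the idempotent, linearly ordered, archimedian and cancellative hypotheses are used — is to reduce the infinitely many constraints to finitely many: I would show that there is a \emph{computable} bound $K_0=K_0(N)$ such that $c A^k b=S_k$ for $0\leq k\leq K_0$ already forces equality for all $k$. The tool is the eventual periodicity of one-letter $K$-rational sequences: both $(c A^k b)_k$ and $(S_k)_k$ are, past some threshold, finite maxima of arithmetic sequences whose slopes are mean weights of cycles, and the period divides the lcm of cycle lengths in an $N$-node graph, a quantity bounded by a Landau-type function of $N$ alone. The archimedian law guarantees that distinct slopes force genuine separation, while cancellativity controls the comparison of equal-slope pieces, so that two such sequences coincide everywhere as soon as they coincide on a block of length exceeding the period plus the threshold. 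I expect the \emph{threshold} to be the main obstacle: unlike the period, it depends \emph{a priori} on the unbounded magnitudes of the entries of $(c,A,b)$, so making it uniform and computable over the whole parameter space is the delicate part. I would handle this by stratifying $K^{2N+N^2}$ according to the combinatorial type of the weight-maximizing paths; on each stratum the sequence $c A^k b$ is given by a single formula affine in $k$, the threshold becomes uniform, and equality with $S$ reduces to finitely many monomial comparisons. A more systematic route is to invoke the Kleene–Schützenberger correspondence, write $c A^k b$ as the scalar rational expression $c(At)^{*}b$ with the entries of $c,A,b$ as symbolic coefficients, and prove by structural induction that the set of coefficient values yielding a given series is semi-polyhedral; there the star operation is precisely the step encoding the periodicity above, and is the hard inductive case.
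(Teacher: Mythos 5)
Your first two steps are sound (they amount to Lemma~\ref{lem-7} of the paper plus the closure of semi-polyhedral sets under finite unions and intersections), but the declared crux of your plan --- a computable bound $K_0(N)$ such that $cA^kb=S_k$ for $0\le k\le K_0$ forces equality for all $k$ --- is precisely the ``equality theorem'' property~\eqref{eq-class} that the introduction of the paper shows to be \emph{false} over $\qmax$. The paper's own counterexample already kills it, even if you allow the bound to depend on the target series $S$ and not just on $N$: the dimension-$2$ realizations $c=(0\;\,0)$, $A=\mathrm{diag}(0,-1)$, $b=(\alpha,0)^T$ recognize $S^\alpha_k=\max(\alpha,-k)$; take as target the member with $\alpha=-\infty$, i.e.\ $S_k=-k$. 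A candidate with finite $\alpha$ agrees with $S$ for all $k\le -\alpha$ and disagrees beyond, and $-\alpha$ is unbounded over the parameter space. So no finite truncation, however computed, reduces the infinitely many coefficient equations to finitely many, and the whole difficulty of the theorem is exactly this failure.

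You do sense the obstruction (``the threshold depends a priori on the unbounded magnitudes of the entries''), but the two patches you sketch are where all the content lies, and neither is carried out. The stratification patch is close to circular: a stratum records which monomial of $cA^kb$ is maximizing for \emph{every} $k$, so it is itself cut out by infinitely many inequalities, and proving that there are finitely many strata, each semi-polyhedral, is essentially the theorem again --- it is exactly here that the archimedian and cancellative hypotheses must act, and your proposal contains no argument for it. The paper avoids any threshold altogether: it normalizes the universal series $\sfS_N=\sfc(\sfA X)^*\sfb$ to a star-height-one form $\sfP\oplus X^{\kappa c}\bigoplus_i \sfu_i X^{\mu_i}(\sfq_i X^c)^*$ in which $\kappa$ and $c$ come from the path/circuit combinatorics of the $N$-node graph and are therefore \emph{uniform} over $K^{2N+N^2}$, independent of magnitudes; after undersampling (Lemma~\ref{lem-3}) and splitting off this uniform transient (Lemma~\ref{lem-4}), the entire infinite tail is one equation between a finite sum of lines $\sfu_i(\sfq_i X)^*$ and a line $u(qX)^*$, which is converted into finitely many monomial comparisons by Lemma~\ref{lem-8} ($v(wX)^*\le u(qX)^*$ iff $v=\zero$ or both $v\le u$ and $w\le q$) and the convexity Lemma~\ref{convex} (a sum of lines equals $S$ iff each summand is $\le S$ and some summand equals $S$). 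Your closing sentence --- treat the entries symbolically and induct on a rational expression, with the star as the hard case --- is indeed the paper's strategy (Theorem~\ref{th-2}), but naming it is not proving it: the hard case is exactly the pair of lemmas above, for which you give no argument.
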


We get as a consequence
of Theorem~\ref{th-alg}:

\begin{corollary}
\label{cor1}
\sloppy
When $K=\qmax$, $\Z_{\max}$, or $\N_{\min}$,
the existence of a realization of dimension
$n$ of a $K$-recognizable sequence is decidable.
\end{corollary}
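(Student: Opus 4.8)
The plan is to reduce the existence question to an emptiness test for the effectively given semi-polyhedral set supplied by Theorem~\ref{th-alg}. First I would verify that each of $\qmax$, $\Z_{\max}$, and $\N_{\min}$ is an idempotent, linearly ordered, archimedian, cancellative, commutative semiring, so that Theorem~\ref{th-alg} is applicable. Idempotency, commutativity, and cancellativity are immediate (for $\qmax,\Z_{\max}$ the cancellation $u\otimes v=u'\otimes v$ reads $u+v=u'+v$, which forces $u=u'$ unless $v=\zero$, and dually for $\N_{\min}$); the semiring order coincides with the usual order on $\Q$ and $\Z$ and with its reverse on $\N$, and is therefore linear; and the archimedian property is a one-line check, since $u\otimes\lambda^k\geq v\otimes\mu^k$ for all $k\geq 0$ with $v$ finite forces $\lambda\geq\mu$ upon letting $k\to\infty$.

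Granting this, Theorem~\ref{th-alg} produces, from any realization of the sequence, an explicit finite family of polyhedra $P_1,\ldots,P_\ell\subseteq K^{2N+N^2}$ whose union is exactly the set of realizations of the prescribed dimension $N$. A realization of that dimension exists if and only if this union is nonempty, equivalently if and only if at least one $P_j$ is nonempty. So it suffices to decide, for a single polyhedron, the feasibility of a finite conjunction of half-space constraints $m_i(x)\geq m_i'(x)$ with the $m_i,m_i'$ monomials.

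To decide this I would pass to the additive notation, writing $x=(x_1,\ldots,x_p)$ with $p=2N+N^2$. A monomial $u\,x_1^{\alpha_1}\cdots x_p^{\alpha_p}$ evaluates to the affine form $u+\sum_i\alpha_i x_i$ as soon as $u\neq\zero$ and every variable occurring with positive exponent is finite, and to $\zero$ otherwise. The step I expect to be the main obstacle is precisely the absorbing element $\zero$ (namely $-\infty$ for $\qmax,\Z_{\max}$ and $+\infty$ for $\N_{\min}$): fixing a variable to $\zero$ collapses every monomial in which it appears, so the constraints are not globally affine. I would dispose of this by a finite case split over the subsets $J\subseteq\{1,\ldots,p\}$ of coordinates declared finite, the remaining coordinates being fixed to $\zero$.

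For each such $J$, substitution turns every constraint $m_i(x)\geq m_i'(x)$ into one of three things: a trivially true inequality, a trivially false one, or an ordinary affine inequality in the finite variables $(x_j)_{j\in J}$, with integer coefficients $\alpha_i$ and constants in $\Q$ (respectively $\Z$ or $\N$), the order being possibly reversed in the $\N_{\min}$ case. Hence $P_j$ is nonempty if and only if, for some $J$, the resulting finite system of affine inequalities is feasible---over $\Q^{|J|}$ when $K=\qmax$, over $\Z^{|J|}$ when $K=\Z_{\max}$, and over $\N^{|J|}$ when $K=\N_{\min}$. Feasibility of a finite affine system is decidable by linear programming in the rational case, and by integer linear programming, equivalently by quantifier elimination in Presburger arithmetic, in the two integer cases. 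Since there are finitely many polyhedra $P_j$ and finitely many subsets $J$, the whole test terminates, which establishes the claimed decidability.
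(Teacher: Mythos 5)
Your proposal is correct and follows essentially the same route as the paper: Theorem~\ref{th-alg} reduces the existence question to non-emptiness of an effectively constructed semi-polyhedral set, which is then decided by linear programming (or the first-order theory of $(\Q,+,\leq)$) when $K=\qmax$ and by Presburger arithmetic when $K=\Z_{\max}$ or $\N_{\min}$. Your explicit case split over the coordinates fixed to $\zero$ and your verification that the three semirings satisfy the hypotheses of Theorem~\ref{th-alg} simply fill in details the paper leaves implicit.
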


Indeed, when $K=\qmax$, the non-emptiness of a 
semi-polyhedral set is decidable,
because the first order theory of $(\Q,+,\leq)$
is decidable, or, to use a perhaps more elementary
argument, because the non-emptiness of an ordinary polyhedron
can be checked by linear programming (see e.g.~\cite{schrijver}).
When $K=\Z_{\max}$
or $\N_{\min}$, the corollary follows from
the decidability of Presburger's arithmetics (see e.g.~\cite{enderton}).

It follows from Corollary~\ref{cor1} that
there is an algorithm to compute max-plus minimal
realizations, a problem which arose from the beginning
of the development of the max-plus modelling of discrete
event systems~\cite{cohen89a}, which was mentioned in the book~\cite{BICOQ}
and was stated by Olsder and De Schutter~\cite{olsder} as one
of the open problems of~\cite{bv}.
In fact, the algorithm is very expensive
(see the discussion in~\S\ref{sec-universal}),
so our result only implies that 
we can solve the realization problem in ``small''
dimension. A Caml implementation by G. Melquiond and P. Philipps
is available~\footnote{http://perso.ens-lyon.fr/natacha.portier/realisations-max-plus.tar.gz}. 
It would be interesting to find
a less expensive algorithm.

Before proving Theorem~\ref{th-alg},
it is instructive to show why classical arguments fail
to prove these result. A natural idea,
would be to show that if two sequences $S$ and $T$
have realizations of respective sizes $N$ and $M$, 
there is an integer $\nu(N,M)$ such that:

\begin{equation}
(S_k=T_k ,\forall k\leq \nu(N,M)) \implies (S_k=T_k,\forall k\in \N) \enspace.
\label{eq-class}
\end{equation}

(Results of this kind are called ``equality theorems''
by Eilenberg, see~\cite[Chap.~6, \S~8]{eilenberg74}.)
Indeed, if the semiring $K$ satisfies property~\eqref{eq-class},
then, the set of realizations of dimension $N$
of a sequence $T$ given by a realization of dimension
$M$ is the set defined by the finite system of equations $cA^kb=T_k$,
for $k=0,\ldots, \nu(N,M)$. There are two classical cases
where property~\eqref{eq-class} is true.
First, if $K$ is a finite semiring (like
the Boolean semiring),~\eqref{eq-class}
is trivially true since the set of realizations
of a given dimension is finite (and,
of course, the minimal realization problem is decidable).
A second, more interesting case, is when $K$
is a subsemiring of a commutative ring. Then, the Cayley-Hamilton
theorem implies
that~\eqref{eq-class} holds with $\nu(N,M)=N+M-1$,
by a standard argument (see~\cite[Chap.~6, proof of Th.~8.1]{eilenberg74}).
An interesting feature 
of the max-plus semiring is
that~\eqref{eq-class} does not hold. For instance, the realization 
of dimension $2$ over $\qmax$,
\[
c=\begin{pmatrix} 0 & 0 \end{pmatrix},\quad 
A=\begin{pmatrix} 0 & -\infty \\ 
-\infty & -1  \end{pmatrix},\quad 
b= \begin{pmatrix} \alpha \\  0 \end{pmatrix},\quad 
\]
where $\alpha$ is an element of $\qmax$, 
recognizes the sequence $S^\alpha: S^\alpha_k = \max(\alpha,-k)$. To
distinguish between $S^\alpha$ and $S^\beta$, we need to consider
values of $k\geq \min(-\alpha,-\beta)$, 
and this contradicts~\eqref{eq-class}.


Our proof of Theorem~\ref{th-alg} relies on
a more general result, of independent interest.
Let us first briefly recall some basic
facts about rational series
in one letter (see~\cite{berstelreut} for a detailed
presentation).
Let $X$ denote an indeterminate.
A sequence $S_0,S_1,\ldots\in K$ can be identified
to the formal series $S=S_0\oplus S_1 X\oplus S_2X^2 \oplus\cdots
\in K[[X]]$ (in particular, the indeterminate
$X$ corresponds to the sequence $\zero,\unit,\zero,\zero,\ldots$).
The set of formal series $K[[X]]$, equipped with
entrywise sum and Cauchy product, is a semiring.
The Kleene's star of a series $S$,
defined when $S$ has a zero
constant coefficient, is $S^*=S^0\oplus S\oplus S^2 \oplus
\cdots$ The $k$-th coefficient of $S$
will sometimes be denoted by $\<S,X^k>$ instead
of $S_k$. The Kleene-Sch\"utzenberger theorem states
that $S$ is recognizable if, and only if, it is
rational, i.e., if it can be represented
by a well formed expression involving sums, products, stars,
and monomials. 

Consider now a finite
set of commuting indeterminates,
$\Sigma=\{\sfd_1,\ldots,\sfd_n\}$,
and let $K[\Sigma]$ denote the semiring
of polynomials in $\sfd_1,\ldots,\sfd_n$.
To a vector $d=(d_1,\ldots,d_n)\in K^n$,
we associate the evaluation morphism 
$K[\Sigma][[X]]\to K[[X]]$,
which sends the series $\sfS\in 
K[\Sigma][[X]]$ to the series
$\ev d \sfS$ obtained by
replacing each indeterminate $\sfd_i$ by the value $d_i$.
Borrowing the probabilist notation,
we denote by $\{\sfS=S\}$ the set
$\set{d\in K^n}{\ev{d}{\sfS}=S}$.
More generally, for $\sfS,\sfT\in K[\Sigma][[X]]$,
we shall write for instance $\{\sfS\geq \sfT\}$
as an abbreviation of $\set{d\in K^n}{\ev d{\sfS}\geq \ev d{\sfT}}$.

\begin{theorem}\label{th-2}(Rational series synthesis)
Let $K$ denote an idempotent linearly ordered archimedian
cancellative 
commutative semiring. For all rational series $\sfS\in K[\Sigma][[X]]$
and for all rational series $S\in K[[X]]$, the set 
$\{\sfS=S\}$ is semi-polyhedral.
\end{theorem}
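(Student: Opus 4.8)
The condition $\ev d\sfS = S$ unfolds as the infinite conjunction $\bigwedge_{k\ge0}\{\sfS_k = S_k\}$, in which each $\sfS_k\in K[\Sigma]$ is a polynomial and each $S_k\in K$ is a constant. A single constraint is already semi-polyhedral: writing $\sfS_k=\bigoplus_i m_i$ as a finite $\oplus$ of monomials, one has $\{\sfS_k=S_k\}=\big(\bigcup_i\{m_i\ge S_k\}\big)\cap\big(\bigcap_i\{S_k\ge m_i\}\big)$, a finite boolean combination of half-spaces (the constant $S_k$ being a monomial). Since semi-polyhedral sets are closed under finite unions and finite intersections, the whole difficulty is to replace the infinitely many constraints indexed by large $k$ by finitely many. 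As the example in the introduction shows, one cannot simply truncate the index; instead I would exploit the eventually geometric structure of one-letter rational series and then let the archimedian axiom collapse the infinite family. The plan therefore has two phases: a structural normalisation of $\sfS$ and of $S$, and an elimination of the index $k$.

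For the structural phase I would prove that $\sfS$ is \emph{ultimately geometric}: there exist a period $p\ge1$ and a threshold $k_0$ such that for every residue $r\in\{0,\dots,p-1\}$ there is a finite family of pairs of monomials $(\mu_\ell,\lambda_\ell)_{\ell\in L_r}$ in the $\sfd_i$ with
$$\ev{d}{\sfS_{k_0+r+jp}} \;=\; \bigoplus_{\ell\in L_r}\ev{d}{\mu_\ell\,\lambda_\ell^{\,j}} \qquad(\forall j\ge0,\ \forall d\in K^n).$$
I would argue by induction on a rational expression for $\sfS$; the case of sum is immediate and the cases of product and star are the delicate ones. The subtlety is that the formal coefficient $\sfS_k$ involves a number of monomials that grows with $k$ (already $(wX)^*(w'X)^*$ produces coefficients $\bigoplus_{a}w^a w'^{\,k-a}$); however, after evaluation in the linearly ordered semiring $K$, this $\oplus$ is attained on the finitely many \emph{extremal} monomials, the interior ones being dominated for \emph{every} value of $d$ by monotonicity and convexity in the exponents. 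Controlling these extremal families — equivalently, the vertices of the Newton data of $\sfS_k$ as $k$ grows — and showing that they trace out finitely many geometric trajectories with a common eventual period is the main obstacle of the proof; the same control could alternatively be extracted from the recognisable form $\sfS_k=\sfc\sfA^k\sfb$ by a uniform max-plus cyclicity analysis of the powers $\sfA^k$. The target $S$, being rational over the linearly ordered archimedian semiring $K$, is ultimately geometric in the stronger sense that on each class it is a single geometric sequence $S_{k_0+r+jp}=a_r\rho_r^{\,j}$ for all $j\ge0$ (after enlarging $k_0$), since a finite $\oplus$ of geometric sequences over a linear order is eventually equal to its dominant term.

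The elimination phase is then clean, and is where the archimedian and cancellative hypotheses enter. Fix a residue $r$; the constraints coming from the class $r$ amount to requiring $\bigoplus_{\ell\in L_r}\ev d{\mu_\ell\lambda_\ell^{\,j}}=\ev d{a_r\rho_r^{\,j}}$ for all $j\ge0$. I claim this set equals $A_r\cap B_r$, where $A_r=\bigcap_{\ell\in L_r}\big(\{\mu_\ell\le a_r\}\cap\{\lambda_\ell\le\rho_r\}\big)$ and $B_r=\bigcup_{\ell\in L_r}\big(\{\mu_\ell=a_r\}\cap\{\lambda_\ell=\rho_r\}\big)$. Indeed $A_r$ expresses that every term stays below the target for all $j$: the key point is that the infinite condition $\mu_\ell\lambda_\ell^{\,j}\le a_r\rho_r^{\,j}$ $(\forall j)$ is equivalent to the two half-space conditions $\mu_\ell\le a_r$ (the instance $j=0$) and $\lambda_\ell\le\rho_r$, the latter implication being precisely the archimedian axiom (up to the degenerate case $\mu_\ell=\zero$), while conversely these two inequalities give the whole family by monotonicity of $\otimes$. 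The set $B_r$ expresses that the target is actually reached: a term with $\lambda_\ell<\rho_r$, or with $\lambda_\ell=\rho_r$ but $\mu_\ell<a_r$, can coincide with the target at most at finitely many values of $j$ (cancellativity is used here to compare geometric sequences), so reaching $a_r\rho_r^{\,j}$ for all $j$ forces some term to match the target in both rate and constant. Each of $A_r$, $B_r$ is a finite boolean combination of half-spaces, hence semi-polyhedral.

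Finally I would assemble $\{\sfS=S\}=\bigcap_{k<k_0}\{\sfS_k=S_k\}\cap\bigcap_{r=0}^{p-1}(A_r\cap B_r)$, a finite intersection of semi-polyhedral sets, which is therefore semi-polyhedral. Besides the structure theorem above, the remaining care concerns degenerate residue classes, where $S$ is eventually $\zero$ or where some $\mu_\ell$ or $\lambda_\ell$ vanishes; these are handled by the same half-space descriptions (the constant $\zero$ being a monomial) and do not affect the conclusion. I expect the structural phase — proving that evaluation turns the unboundedly many monomials of $\sfS_k$ into finitely many geometric families sharing a common period — to be the genuinely hard part, the elimination phase being essentially forced by the archimedian and cancellative axioms.
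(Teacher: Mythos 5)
Your overall architecture coincides with the paper's: normalize $\sfS$ and $S$ to a common period and threshold so that each residue class of the coefficient sequence is a finite sup of geometric terms, dispose of the transient coefficients as finitely many polynomial equations, and collapse each infinite family of tail conditions, via the archimedian and cancellativity axioms, into ``every line lies below the target and some line equals it''. Your elimination phase is exactly the paper's chain of lemmas: the comparison of lines ($v(wX)^*\leq u(qX)^*$ implies $v=\zero$ or ($v\leq u$ and $w\leq q$)), a strict refinement of the archimedian inequality whose proof uses cancellativity, and the ``convexity'' lemma asserting that $\bigoplus_i T_i=S$ iff $T_i\leq S$ for all $i$ and $T_j=S$ for some $j$; your description $A_r\cap B_r$, with the $\mu_\ell=\zero$ caveat that you do flag, is precisely this, and your decomposition of the target $S$ (a single geometric term per residue class, after enlarging the threshold) is the paper's characterization of rational series over such semirings as merges of ultimately geometric series. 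This phase of your plan is correct as it stands.

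The genuine weakness is the structural phase, which you leave unproved and explicitly flag as the main obstacle. What you propose to establish semantically (extremal monomials, Newton data, or a uniform cyclicity analysis of $\sfA^k$) the paper obtains purely formally, in a few lines: the set of series of the form $\bigoplus_{1\leq i\leq r}\sfP_i(\sfq_i X^c)^*$, with $\sfP_i\in K[\Sigma][X]$ and $\sfq_i\in K[\Sigma]$, is closed under sum, Cauchy product and star, thanks to the classical commutative rational identities $(\sfU\oplus\sfV)^*=\sfU^*\sfV^*$, $(\sfV\sfU^*)^*=\unit\oplus\sfV(\sfU\oplus\sfV)^*$, and $\sfU^*=(\unit\oplus\sfU\oplus\cdots\oplus\sfU^{k-1})(\sfU^k)^*$ (only the first uses idempotency and commutativity). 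In particular, the monomial growth you worry about dissolves at the formal level: the coefficients $\bigoplus_{a+b=k}w^a w'^b$ of $(wX)^*(w'X)^*$ are, by idempotency of the semiring $K[\Sigma]$, equal to $(w\oplus w')^k$, so the ``rate'' remains a single element of $K[\Sigma]$ and no analysis of which monomials dominate after evaluation is ever needed; uniformity in $d$ is automatic because the identities hold in $K[\Sigma][[X]]$ itself. Relatedly, you need not force the pairs $(\mu_\ell,\lambda_\ell)$ to be monomials: they may be kept as polynomials of $K[\Sigma]$, since sets of the form $\{\sfp\leq p\}$ and $\{\sfp=p\}$ with $\sfp\in K[\Sigma]$, $p\in K$ are semi-polyhedral by a case split on which monomial of $\sfp$ attains the sup (this is the paper's Lemma~\ref{lem-7}). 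So your plan is sound and its second half is the paper's proof, but as written the argument is incomplete exactly at the step you predict to be hard --- a step which, with the right normal form, is the easy part.
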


This theorem will be proved in Section~\ref{proof-section}. 

An intuitive way to state this result
is to say that ``the set of values of the coefficients
of a rational expression which yield a given rational
series is semi-polyhedral''. 

Theorem~\ref{th-alg} is an immediate corollary
of Theorem~\ref{th-2}.
Indeed, consider the set $\Sigma=\Sigma_N$ 
whose elements are the $2N+N^2$ indeterminates
$\sfc_{i},\sfA_{ij},\sfb_{j}$, where $1\leq i,j\leq N$.
Let $\sfc=(\sfc_i)\in (K[\Sigma_N])^{1\times N}$, 
$\sfA=(\sfA_{ij})\in (K[\Sigma_N])^{N\times N}$, 
$\sfb=(\sfb_j)\in (K[\Sigma_N])^{N\times 1}$, 
and consider the {\em universal series}
$\sfS_N = \sfc (\sfA X)^*\sfb = \sfc \sfb \oplus \sfc\sfA\sfb X
\oplus\cdots \in K[\Sigma_N][[X]]$, which, by construction, is recognizable
(or equivalently, rational).
Since the set of realizations
of dimension $n$ of a rational series $S\in K[[X]]$ 
is exactly $\{\sfS_N=S\}$,
Theorem~\ref{th-2} implies Theorem~\ref{th-alg}.

We warn the reader that some apparently minor variants
of $\{\sfS =S\}$ need not be semi-polyhedral. For
instance, since $\{\sfS\leq S\}=
\{\sfS\oplus S=S\}$, by Theorem~\ref{th-2},
$\{\sfS\leq S\}$ is semi-polyhedral,
but we shall see in~\S\ref{sec-ex}
that $\{\sfS\geq S\}$ need not be semi-polyhedral.

In Section~\ref{sec-universal}, we bound the complexity
of the algorithm which is contained in the proof of Theorems~\ref{th-alg}
and~\ref{th-2}.
The details of this complexity analysis are lengthy,
but its principle is simple: we need first to compute a 
star height one representation of the universal series
$\sfS_N = \sfc (\sfA X)^*\sfb$. We give an explicit
representation, which turns out to be of double exponential
size. Then, we compute the semi-polyhedral set arising from
this expression, which yields a simply exponential blow up,
leading to a final triple exponential bound. 

This high complexity implies that Theorem~\ref{th-alg}
is only of theoretical interest. However, it should be noted
that Theorem~\ref{th-2} allows us to solve
more generally the ``structured
realization problem'', in which some
coefficients of the realizations are constrained
to be zero. Consider
for instance the problem of computing
all $N$ dimensional realizations $(\sfc, \sfA,\sfb)$
of a linear sequence $S$, subject to the constraint
that $\sfA$ is diagonal. The set
of realizations becomes $\{\bigoplus_{1\leq i\leq N} \sfc_i (\sfA_{ii} X)^*\sfb_i= S\}$, and Theorem~\ref{th-2} shows that this
set is semi-polyhedral. For such structured problems in which the
universal series $\sfS_N$ is replaced by a polynomial size rational
expression, the present approach leads only to a simply exponential complexity.

The algorithmic difficulties encountered here are consistent
with the observation that algorithmic issues concerning linear systems
over rings (and a fortiori over semirings)
are generally harder than in the case over fields.
In particular, the powerful ``geometric approach''
based on the computations of invariant spaces
does carry over to the ring case~\cite{BasMar91},
and even to the max-plus case~\cite{ccggq99,katz07,invariant09},
but then, the analogues of the classical fixed point
algorithms do not always terminate (due to the lack of Artinian
or Noetherian properties). The present algebraic approach, via
rational series, yields alternative tools to the geometric
approach: no termination issue arises, but the algorithms
are subject to a curse of complexity.

It is also instructive to look at Theorem~\ref{th-2} in the light
of the recent developments of tropical geometry~\cite{itenberg,RGST}. The latter
studies in particular the tropical analogues of algebraic sets. 
The tropical analogues of {\em semi}-algebraic sets could
be considered as well: it seems reasonable to define
them precisely as the special semi-polyhedral sets introduced here 
(recall that the exponents appearing in the monomials are
required to be nonnegative integers). 
Then, Theorem~\ref{th-alg} may be thought of as the max-plus
analogue of a known result, that the set of {\em nonnegative} realizations of
a given dimension of a linear sequence over the real numbers (equipped with the usual addition and multiplication) is semi-algebraic (this follows readily
from the ``equality theorem'' mentioned above). 
Then, a comparison with the complexity of existing semi-algebraic
algorithms~\cite{roy} suggests that the present triple exponential
bound is probably suboptimal. To improve it, we would need
to further exploit the tropical semi-algebraic structure.
This raises further issues which are beyond the scope of this paper.

Let us complete this long introduction by pointing
out a few relevant references about the minimal realization
problem. 

First, there are two not so well known
theorems, which hold in arbitrary semirings. 
A result of Fliess~\cite{fliess75b}
characterizes the minimal
dimension of realization as the minimal dimension
of a semimodule stable by shift and containing the
semimodule of rows of the Hankel matrix. 
(The result is stated there for the semiring $(\R^+,+,\times)$,
but, as observed by Jacob~\cite{jacob}, the proof
is valid in an arbitrary semiring.)
Maeda and Kodama found independently closely
related results~\cite{maeda}.
As observed by Duchamp and Reutenauer (see Theorem~2 in~\cite{duchamp}),
Fliess's characterization is a third fundamental statement to add
to the Kleene-Sch\"utzenberger theorem. 
The classical realization theorems over fields
are immediate corollaries of this result.
The results of Anderson, Deistler, Farina
and Bevenuti~\cite{farina96} and Benvenuti and Farina~\cite{farina99}
for nice applications of these ideas.
We also refer the reader to the book~\cite{berstreut} for a general
discussion of minimization issues concerning noncommutative rational
series. 
A second fundamental result, due to Eilenberg~\cite[Ch.~16] {eilenberg74}
(inspired by Kalman), extends the notion of recognizability
and shows the existence of a minimal module which recognizes
a sequence. The difficulty is that this module
need not be free. (Eilenberg's theorem
is stated for modules over rings, but, as noted in~\cite{jpq2},
it can be extended to semimodules over semirings).
The max-plus minimal realization problem was raised by Cohen, 
Moller, Quadrat and Viot~\cite{cohen85b-inria},
and by Olsder~\cite{olsder86b}
(see also~\cite{BICOQ}). There are relatively few
general results about this (hard) problem. 
Olsder~\cite{olsder86b} showed some connections
between max-plus realizations, and conventional
realizations, via exponential asymptotics.
Cuninghame-Green~\cite{cuning91} gave a realization
procedure, which yields, when it can be applied,
an upper bound for the minimal dimension of realization.
Some lower and upper bounds 
involving various notions of rank over the max-plus semiring
were given in~\cite[Chap.~6]{gaubert92a}. In particular, the
cardinality of a minimal generating family of the 
row or column space of the Hankel matrix, which characterizes
the minimal dimension of realization in the case of fields,
is only a (possibly coarse) upper bound in the max-plus semiring.
The lower bound of~\cite[Chap.~6]{gaubert92a} (which
involves max-plus determinants) also appears in~\cite{gbc}, where
it is used to extend to the convex case a theorem proved by 
Butkovi\v{c} and Cuninghame-Green~\cite{cgpb} in the strictly convex case.
De Schutter and De Moor~\cite{deschut} 
observed that the (much simpler) {\em partial}
max-plus realization problem can be interpreted as an extended linear
complementarity problem. This work was pursued by
De Schutter in~\cite{deschutter96}.

\section{Max-plus Rational Expressions}
In this section, we recall some basic results
about max-plus rational expressions, which
will be needed in the proof of Theorem~\ref{th-2}.

The first step of the proof of Theorem~\ref{th-2} is the following
well known star height one representation
(some variants of which
already appeared in particular in works of Moller~\cite{moller88},
of Bonnier-Rigny and Krob~\cite{krob93a},
and in~\cite{gaubert92a,gaubert94a}).
All these results can be thought of as specializations, or refinements,
of general results on commutative rational 
expressions~\cite{eilenberg69,conway71}.

In the sequel, $\sfK$ denotes a generic
semiring (which may or may not coincide with the
semiring $K$ of Theorem~\ref{th-2}).
\begin{lemma}\label{lemma-2}
Let $\sfK$ be an idempotent
commutative semiring. A rational
series $\sfS\in \sfK[[X]]$ can be written
as 
\begin{equation}
\sfS = \bigoplus_{1\leq i\leq r} \sfP_i (\sfq_i X^c)^* 
\label{ratu} \enspace,
\end{equation}
where $\sfP_1,\ldots, \sfP_r\in \sfK[X]$,
$\sfq_1,\ldots,\sfq_r\in \sfK$,
and $c$ is a positive integer.
\end{lemma}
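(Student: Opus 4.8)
The plan is to argue by structural induction on a rational expression representing $\sfS$, built from monomials by $\oplus$, Cauchy product, and Kleene star, showing that the class of series admitting a representation of the form~\eqref{ratu} --- call them \emph{normal forms} --- is closed under these three operations. Throughout I would exploit two features of an idempotent commutative semiring. First, the natural order $a\leq b\iff a\oplus b=b$ makes $\zero$ the least element, so a finite sum $\bigoplus_i p_i=\zero$ forces every $p_i=\zero$; this guarantees that the precondition for applying a star propagates to each summand of a normal form. Second, idempotency and commutativity collapse the binomial coefficients in $(a\oplus b)^m=\bigoplus_{j=0}^m a^j b^{m-j}$, yielding, for $a,b$ with zero constant coefficient, the identity $(a\oplus b)^*=a^*b^*$, together with $v^*v^*=v^*$ and the derived identity $(uv^*)^*=\unit\oplus u\,(u\oplus v)^*$.

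The easy ingredients come first. A monomial $uX^k$ is the normal form $uX^k(\zero X^c)^*$ for any $c$, since $(\zero X^c)^*=\unit$. To add or multiply two normal forms I first put them over a common period: if $\sfS$ has period $c$ and $c'=mc$, expanding $(\sfq X^c)^*=\bigoplus_{0\leq j<m}\sfq^{\,j}X^{jc}(\sfq^{\,m}X^{c'})^*$ rewrites $\sfS$ as a normal form of period $c'$, so any two normal forms may be assumed to share a period $c=\lcm(c_1,c_2)$. Addition is then mere concatenation of the two families of terms. For the product $\sfS\otimes\sfT$ with $\sfS=\bigoplus_i\sfP_i(\sfq_iX^c)^*$ and $\sfT=\bigoplus_j\sfP'_j(\sfs_jX^c)^*$, distributing reduces matters to a product of two stars of the same period, which $(a\oplus b)^*=a^*b^*$ collapses into $(\sfq_iX^c)^*(\sfs_jX^c)^*=((\sfq_i\oplus\sfs_j)X^c)^*$; absorbing the polynomial factors $\sfP_i\sfP'_j$ then gives a normal form.

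The substantial step is the star. Given a normal form $\sfS=\bigoplus_{1\leq i\leq r}\sfP_i(\sfq_iX^c)^*$ with zero constant coefficient, its constant coefficient is $\bigoplus_i\langle\sfP_i,X^0\rangle=\zero$, so by the least-element remark each $\sfP_i$, and hence each term $\sfP_i(\sfq_iX^c)^*$, has zero constant coefficient. Applying $(a\oplus b)^*=a^*b^*$ repeatedly gives $\sfS^*=\bigotimes_{1\leq i\leq r}\bigl(\sfP_i(\sfq_iX^c)^*\bigr)^*$, and since products of normal forms are normal forms it suffices to treat a single factor. There the identity $(uv^*)^*=\unit\oplus u(u\oplus v)^*$ yields $\bigl(\sfP_i(\sfq_iX^c)^*\bigr)^*=\unit\oplus\sfP_i(\sfP_i\oplus\sfq_iX^c)^*$, reducing everything to the star of the polynomial $\sfP_i\oplus\sfq_iX^c$.

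Finally I would reduce the star of a polynomial $\sfP=\bigoplus_{k=1}^d r_kX^k$ (zero constant coefficient) to the cases already handled: $(a\oplus b)^*=a^*b^*$ gives $\sfP^*=\bigotimes_{k=1}^d(r_kX^k)^*$, each factor $(r_kX^k)^*$ being a normal form of period $k$; lifting all factors to the common period $\lcm(1,\ldots,d)$ and multiplying produces a normal form, which multiplied by the polynomial $\sfP_i$ and added to $\unit$ closes the single-factor computation, and then the product over $i$ closes the star case and the induction. I expect the \emph{main obstacle} to be precisely this star step --- in particular, the bookkeeping required to maintain a single shared period $c$ while iterating these identities --- whereas the algebraic heart, the collapse $(a\oplus b)^*=a^*b^*$ and its consequence $(uv^*)^*=\unit\oplus u(u\oplus v)^*$, is exactly where idempotency and commutativity are indispensable.
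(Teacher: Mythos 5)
Your proposal is correct and takes essentially the same approach as the paper: a structural induction showing that the class of series of the form~\eqref{ratu} is closed under sum, Cauchy product and Kleene star, based on exactly the three identities the paper invokes, namely $(\sfU\oplus\sfV)^*=\sfU^*\sfV^*$, $(\sfV\sfU^*)^*=\unit\oplus\sfV(\sfU\oplus\sfV)^*$, and the period-changing identity $\sfU^*=(\unit\oplus\sfU\oplus\cdots\oplus\sfU^{k-1})(\sfU^k)^*$. The paper simply states these identities and leaves the closure verifications to the reader, whereas you carry them out explicitly (common-period reduction, the zero-constant-coefficient bookkeeping for the star), which is precisely the intended argument.
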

\begin{proof}
It suffices to check that the set of series
of the form~\eqref{ratu} is closed by sum,
Cauchy product, and Kleene's star.
This follows easily from
the following classical
commutative rational identities (see e.g.~\cite{conway71}),
which are valid for all $\sfU,\sfV\in \sfK[[X]]$
(with zero constant coefficient) and $k\geq 1$,
\begin{eqnarray}
(\sfU \oplus \sfV)^* &=& \sfU^*\sfV^* \enspace ,\label{eq-1} \\
(\sfV\sfU^*)^* & =&\unit \oplus \sfV(\sfU \oplus \sfV)^*\enspace,\label{eq-111}\\
\sfU^*&=&(\unit  \oplus \sfU \oplus \cdots \oplus \sfU^{k-1}) (\sfU^k)^*
\label{eq-cyclic}
\end{eqnarray}
(only in~\eqref{eq-1} we used the idempotency and
commutativity of the semiring).
\end{proof}
The representation~\eqref{ratu} of $\sfs$ is far from being unique.
In particular, thanks to the rational identity
\begin{equation}
\sfU^*=\unit \oplus \sfU \oplus \cdots \oplus \sfU^{k-1}\oplus \sfU^k \sfU^* \enspace,
\label{ur}
\end{equation}
which holds for all $k\geq 1$, we can always rewrite
the series~\eqref{ratu} as 
\begin{equation}
\sfS = \sfP\oplus X^{\kappa c}
\Big(\bigoplus_{1\leq i\leq \rho} \sfu_i X^{\mu_i} (\sfq_i X^c)^* 
\Big)
\label{ratu2}
\end{equation}
where $0\leq \mu_i\leq c-1$,
$\sfu_i\in \sfK$, and $\sfP\in \sfK[X]$ has degree
less than $\kappa c$. The interest of~\eqref{ratu2},
by comparison with~\eqref{ratu},
is that the asymptotics of $\<\sfS,X^k>$ can be read directly
from the rational expression. Indeed, for all $0\leq j
\leq  c-1$ and $k\geq 0$,
\begin{equation}
\<\sfS,X^{(k+\kappa)c+j}> = \bigoplus_{\mu_i=j}
\sfu_i\sfq_i^k \enspace. \label{nice}
\end{equation}
When $\sfK$ is the max-plus semiring,
the representations~\eqref{ratu2}
and~\eqref{nice} can be simplified
thanks to the archimedian property.
We say that a series $S\in K[[X]]$
is {\em ultimately geometric} 
if there is an integer $\kappa$ and a scalar $\gamma\in K$
such that $\<S,X^{k+1}>=\gamma \<S,X^k>$,
for all $k\geq \kappa$. The {\em merge}
of $c$ series $S^{(0)},\ldots,S^{(c-1)}$
is the series $S^{(0)}(X^c)
\oplus XS^{(1)}(X^c)
\oplus \cdots X^{c-1}S^{(c-1)}(X^c)$,
whose coefficient sequence is obtained by ``merging''
the coefficient sequences of $S^{(0)}$,
\ldots, $S^{(c-1)}$.
E.g., the merge of 
\begin{equation}
S^{(0)}=X^*=0\oplus 0X\oplus 0X^2
\oplus \cdots
\,\mrm{and}\; S^{(1)}=1(1X)^*=1\oplus 2X\oplus 3X^2
\oplus\cdots
\label{eq-22}
\end{equation} is
\begin{equation}
T=(X^2)^*\oplus 1X(1X^2)^*=
0\oplus 1X\oplus 0X^2\oplus 2X^3 \oplus 0X^4
\oplus 3X^5\oplus\cdots
\label{eq-23}
\end{equation}
The following elementary but 
useful consequence of Lemma~\ref{lemma-2} 
and of the archimedian condition characterizes
the rational series over max-plus like semirings.
This theorem, which is a series analogue
of the max-plus cyclicity theorem
for powers of max-plus matrices
of Cohen, Dubois, Quadrat and Viot~\cite{cohen83}
(see also~\cite{cohen85a,BICOQ,maxplus97,agw04,atwork}), 
was anticipated by Cohen, Moller, Quadrat
and Viot in~\cite{cohen89a}, where
a result similar to Theorem~\ref{th-maxplusrat}
is proved in the special case of series 
with nondecreasing coefficient sequence.
Moller~\cite{moller88},
and Bonnier-Rigny and Krob~\cite{krob93a},
proved results which are
essentially equivalent to Theorem~\ref{th-maxplusrat},
which is taken from~\cite{gaubert92a,gaubert94a,gaubert94d}
(slightly more general assumptions are made on the
semiring, in the last two references).
Theorem~\ref{th-maxplusrat} is in fact a max-plus analogue
of a deeper result, Soittola's theorem~\cite{soittola}, which characterizes
nonnegative rational series
as merges of series with a dominant root
(see also Perrin~\cite{perrin}).
\typeout{FARINA TO BE ADDED}
\begin{theorem}\label{th-maxplusrat}
Let $K$ denote an idempotent linearly ordered
archimedian commutative semiring. A series $S\in K[[X]]$
is rational if, and only if, it is a merge
of ultimately geometric series.
\end{theorem}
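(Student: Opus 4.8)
The plan is to prove the two implications separately, with essentially all the work in the ``only if'' direction. For the ``if'' direction, I would first observe that an ultimately geometric series is rational: if $\<S,X^{k+1}>=\gamma\<S,X^k>$ for all $k\geq\kappa$, then $S$ agrees with a polynomial of degree $<\kappa$ plus a tail of the form $\sfu X^{\kappa}(\gamma X)^{*}$ (taking $\sfu=\<S,X^{\kappa}>$), and both summands are rational. Since substituting $X^{c}$ for $X$, multiplying by $X^{j}$, and forming finite sums all preserve rationality, a merge of finitely many ultimately geometric series is rational, which settles this direction.

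For the converse I would start from the normal form of Lemma~\ref{lemma-2}, refined into the shape~\eqref{ratu2}, so that the coefficients of $S$ are described residue-class by residue-class through~\eqref{nice}: for each $0\leq j\leq c-1$ the subsequence $\<S,X^{(k+\kappa)c+j}>$ equals $\bigoplus_{\mu_i=j}\sfu_i\sfq_i^{k}$. Writing $S^{(j)}$ for the series whose $m$-th coefficient is $\<S,X^{mc+j}>$, the series $S$ is exactly the merge of $S^{(0)},\ldots,S^{(c-1)}$, so it suffices to prove that each $S^{(j)}$ is ultimately geometric. By~\eqref{nice} this reduces to one clean statement: a finite max-plus combination of geometric sequences $a_k=\bigoplus_{i\in I}\sfu_i\sfq_i^{k}$ is ultimately geometric.

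To establish this I would use that $K$ is linearly ordered to select a dominant root $\sfq^{*}=\bigoplus_{i\in I}\sfq_i$ (after discarding indices with $\sfu_i=\zero$), collect the coefficients of the indices achieving this maximum into $\sfu^{*}=\bigoplus_{\sfq_i=\sfq^{*}}\sfu_i$, and note that $\sfu^{*}\neq\zero$ because $\zero$ is the least element. The goal is then to show $a_k=\sfu^{*}(\sfq^{*})^{k}$ for all large $k$, whence $a_{k+1}=\sfq^{*}a_k$. For an index with $\sfq_i<\sfq^{*}$ and $\sfu_i\neq\zero$, the archimedian hypothesis forbids $\sfu_i\sfq_i^{k}\geq\sfu^{*}(\sfq^{*})^{k}$ from holding for all $k$, so there is some $k_0$ with $\sfu^{*}(\sfq^{*})^{k_0}\geq\sfu_i\sfq_i^{k_0}$. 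I would then promote this single inequality to all $k\geq k_0$ by multiplying through by $(\sfq^{*})^{k-k_0}$ and using monotonicity of multiplication together with $\sfq^{*}\geq\sfq_i$ (hence $(\sfq^{*})^{k-k_0}\geq\sfq_i^{k-k_0}$); monotonicity itself follows from idempotency, since $a\leq b$ gives $ac\oplus bc=bc$. Taking the largest of the finitely many thresholds $k_0$ makes every strictly subdominant term absorbed into $\sfu^{*}(\sfq^{*})^{k}$ simultaneously.

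The main obstacle is precisely this passage from the archimedian axiom — which only asserts the \emph{existence} of a crossover index — to \emph{eventual} domination for all large $k$; the bootstrap via monotone multiplication is what bridges the gap, and it is here that both the linear order and the archimedian condition are indispensable (this is the series-level shadow of the cyclicity theorem). I would also dispatch the degenerate cases: when $I$ contains no index with $\sfu_i\neq\zero$ the sequence $a_k=\zero$ is trivially ultimately geometric, and when the dominant root is shared by several indices the coefficient $\sfu^{*}$ handles them at once via distributivity. Note that cancellativity is nowhere required, consistent with its absence from the hypotheses of this theorem.
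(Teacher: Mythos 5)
Your proof is correct and follows essentially the same route as the paper's: reduce via the normal form~\eqref{ratu2}/\eqref{nice} to one residue class at a time, pick the dominant root using the linear order, and absorb the subdominant terms using the archimedian property. You additionally spell out two points the paper leaves implicit — the easy converse direction, and the bootstrap from the single crossover index supplied by the archimedian axiom to eventual domination via monotone multiplication (the paper's Lemma~\ref{arch} establishes a strict version of this bootstrap using cancellativity, which, as you correctly observe, is not needed for the non-strict absorption required here).
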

\begin{proof}
We have to show that a rational series $S\in K[[X]]$
satisfies 
\begin{equation}
\<S,X^{(k+\kappa)c+j}> =
u q^k \enspace, \quad
\forall k\geq 0,\; 0\leq j\leq c-1\enspace, 
\label{toprove}
\end{equation}
for some $u,q\in K$, and for some integers $\kappa\geq 0, c\geq 1$.
But $S$ has a representation of the form~\eqref{nice},
i.e. $\<S,X^{(k+\kappa_1)c+j}> =\bigoplus_{i\in I_j} u_i q_i^k$,
where $u_i,q_i\in K$, $I_j$ is a finite set,
and $\kappa_1\geq 0,c\geq 1$ are integers.
Let $q=\bigoplus_{i\in I_j} q_i$.
Since $K$ is linearly ordered and $\oplus$ coincides with the least
upper bound, we can find an index $\ell$ such
that $q_\ell=q$, and $u_\ell\geq u_{m}$ for all $m$ such that $q_m=q$.
Then, 
$\<S,X^{(k+\kappa_1)c+j}> =\bigoplus_{i\in I_j,\; q_i<q} u_i q_i^k
\oplus u_\ell q_\ell^k$.
Using the archimedian property, we get
$\<S,X^{(k+\kappa_1)c+j}> = u_\ell q_\ell^k$,
for $k$ large enough, say for $k\geq k_2$.
Setting $\kappa=\kappa_1+k_2$, $q=q_\ell$, and $u=u_\ell q_\ell^{k_2 c}$,
we get~\eqref{toprove}.
\end{proof}
Equivalently, $S$ can be written as 
\begin{equation}
S = P\oplus X^{\kappa c}
\Big(\bigoplus_{0\leq i\leq c-1} u_i X^{i} (q_i X^c)^* 
\Big) \enspace, 
\label{ratu3}
\end{equation}
where $P\in K[X]$ has degree less than $\kappa c$,
and $u_i,q_i\in K$.

Finally, we shall need the inverse operation
of merging, that we call undersampling. 
For each integer $0\leq j\leq c-1$
and for all series $T\in K[[X]]$,
we define the undersampled series:
\[
T^{(j,c)}=
\bigoplus_{k \in \N} \< T, X^{kc+j} > X^k
\enspace. 
\]
For instance, when $T$ is as in~\eqref{eq-23},
$T^{(0,2)}$ and $T^{(1,2)}$
respectively 
coincide with the series $S^{(0)}$ and $S^{(1)}$ of~\eqref{eq-22}.
Trivially, testing the equality of two series
amounts to testing the equality
of undersampled series:
\begin{lemma}\label{lem-3}
Let $c\geq 1$. Two series $\sfT,\sfT'\in \sfK[[X]]$
coincide if, and only if, $\sfT^{(j,c)}=
{\sfT'}^{(j,c)}$ for all $0\leq j\leq c-1$.\qed
\end{lemma}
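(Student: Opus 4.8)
The plan is to reduce the coincidence of two formal series to the equality of all their coefficients, and then to reorganize these coefficients using Euclidean division by $c$. The whole argument rests on a single elementary bijection, so I expect no serious obstacle; indeed the statement is genuinely trivial, which is why the authors assert it without proof.

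First I would recall that two series of $\sfK[[X]]$ are equal precisely when they have identical coefficients, that is, $\<\sfT,X^n>=\<\sfT',X^n>$ for every $n\in\N$. Next I would unwind the definition of the undersampled series: directly from the formula defining $T^{(j,c)}$, one has $\<\sfT^{(j,c)},X^m>=\<\sfT,X^{mc+j}>$ for all $m\in\N$ and $0\le j\le c-1$, and likewise for $\sfT'$.

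The key observation is that the assignment $n\mapsto(j,m)$, where $n=mc+j$ is the Euclidean division of $n$ by $c$ (so that $0\le j\le c-1$ and $m\in\N$), is a bijection from $\N$ onto $\{0,\ldots,c-1\}\times\N$. The forward implication of the lemma is then immediate: if $\sfT=\sfT'$, their undersampled series agree termwise. For the converse, assuming $\sfT^{(j,c)}={\sfT'}^{(j,c)}$ for all $0\le j\le c-1$, I would fix an arbitrary $n\in\N$, write $n=mc+j$ by Euclidean division, and chain the identities
\[
\<\sfT,X^n>=\<\sfT^{(j,c)},X^m>=\<{\sfT'}^{(j,c)},X^m>=\<\sfT',X^n>\enspace.
\]
Since $n$ was arbitrary, all coefficients of $\sfT$ and $\sfT'$ coincide, hence $\sfT=\sfT'$. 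The only point requiring any care is the bookkeeping of this index correspondence; no property of $\sfK$ beyond the definition of $\sfK[[X]]$ is used.
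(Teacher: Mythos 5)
Your proof is correct and is precisely the argument the paper has in mind: the paper states Lemma~\ref{lem-3} without proof, labelling it as trivial, and the intended justification is exactly your coefficientwise comparison via the Euclidean-division bijection $n\mapsto(j,m)$ with $n=mc+j$. Nothing is missing, and no property of $\sfK$ is needed beyond the definition of formal series, just as you note.
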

A last, trivial, remark will allow us to split 
the test that $\sfS=S$ into transient and ultimate parts.
Recall that $X^{-m} \sfS$ denotes the series $\sfT$
such that $\<\sfT, X^k>=\<\sfS,X^{m+k}>$. 
\begin{lemma}\label{lem-4}
Let $m\geq 0$. 
Two series $\sfT,\sfT'\in \sfK[[X]]$
coincide if, and only if, $\<\sfT,X^k>
= \<\sfT',X^k>$ for $k\leq m-1$,
and $X^{-m}\sfT=X^{-m}\sfT'$.\qed
\end{lemma}

\section{Proof of Theorem~\ref{th-2}}
\label{proof-section}
In the sequel, $K$
denotes a semiring that satisfies the assumptions
of Theorem~\ref{th-2} and  $\Sigma=\{\sfd_1,\ldots,\sfd_n\}$ is a finite
set of commuting indeterminates. 
We first prove a simple lemma.
\begin{lemma}\label{lem-7}
For all $\sfp \in K[\Sigma]$ and
$p\in K$, the sets 
$\{\sfp\leq p\}$, 
$\{\sfp \geq p\}$, and
$\{\sfp= p\}$, 
are semi-polyhedral. 
\end{lemma}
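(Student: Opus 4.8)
The plan is to make everything explicit in terms of the coordinates $d=(d_1,\ldots,d_n)$ and the linear order on $K$. A polynomial $\sfp\in K[\Sigma]$ is a finite sum of monomials, $\sfp=\bigoplus_{\alpha\in F} u_\alpha\,\sfd_1^{\alpha_1}\cdots\sfd_n^{\alpha_n}$, where $F$ is a finite set of multi-exponents and $u_\alpha\in K$. Because $K$ is idempotent and linearly ordered, the sum $\oplus$ is the least upper bound for the order $\leq$, so evaluating at $d$ gives $\ev{d}{\sfp}=\max_{\alpha\in F} m_\alpha(d)$, where $m_\alpha(d)=u_\alpha d_1^{\alpha_1}\cdots d_n^{\alpha_n}$ is the evaluation of a single monomial. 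The key reduction is therefore to turn the scalar $p\in K$ into a monomial as well, so that each comparison becomes a half-space in the sense of the paper's definition. This is immediate: the constant $p$ is the monomial $p\,\sfd_1^{0}\cdots\sfd_n^{0}$, and likewise each $m_\alpha$ is a monomial in the variables $\sfd_i$.

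First I would handle $\{\sfp\leq p\}$. Since $\ev{d}{\sfp}$ is the supremum of the $m_\alpha(d)$, the condition $\ev{d}{\sfp}\leq p$ holds if, and only if, $m_\alpha(d)\leq p$ for every $\alpha\in F$ simultaneously. Each individual inequality $\set{d}{m_\alpha(d)\leq p}$ is a half-space (comparing the monomial $m_\alpha$ with the constant monomial $p$), so their intersection over the finite set $F$ is a polyhedron, hence a fortiori semi-polyhedral. Next I would treat $\{\sfp\geq p\}$. Now the supremum is at least $p$ if, and only if, at least one of the monomials already reaches $p$, i.e. $\set{d}{\ev{d}{\sfp}\geq p}=\bigcup_{\alpha\in F}\set{d}{m_\alpha(d)\geq p}$. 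Each set in this union is again a half-space, so the whole set is a finite union of half-spaces, which is semi-polyhedral by definition.

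Finally, $\{\sfp=p\}=\{\sfp\leq p\}\cap\{\sfp\geq p\}$, so it is the intersection of a polyhedron with a finite union of half-spaces. Distributing the intersection over the union exhibits it as a finite union of (polyhedron $\cap$ half-space) sets, each of which is a polyhedron; hence $\{\sfp=p\}$ is semi-polyhedral as well. This uses the elementary closure facts that a finite intersection of polyhedra is a polyhedron and that the intersection of a finite union of polyhedra with a polyhedron is again a finite union of polyhedra, both of which follow directly from the definitions of half-space, polyhedron, and semi-polyhedral set.

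I do not expect any serious obstacle here: the only genuine input is that idempotency together with the linear order lets us identify $\oplus$ with $\max$ and thereby read off $\ev{d}{\sfp}$ as a pointwise maximum of monomial evaluations. The mild point to be careful about is the case of the zero polynomial or a zero coefficient $u_\alpha=\zero$, where one should check that the degenerate comparisons (e.g. $\zero\leq p$ always holds, $\zero\geq p$ holds only when $p=\zero$) are still captured by the half-space formalism; this is a boundary bookkeeping matter rather than a conceptual difficulty, and it does not affect the semi-polyhedral conclusion.
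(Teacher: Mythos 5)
Your proof is correct, but it follows a genuinely different route from the paper's. The paper does not treat the three sets separately: it first invokes idempotency ($u\leq v\iff u\oplus v=v$) to reduce all of them to a single, more general statement, namely that $\{\sfp=\sfq\}$ is semi-polyhedral for \emph{two} polynomials $\sfp,\sfq\in K[\Sigma]$ --- this extra generality is forced by the reduction itself, since $\{\sfp\geq p\}=\{\sfp\oplus p=\sfp\}$ has a genuine polynomial on the right-hand side --- and it then proves that statement by a case analysis over pairs of maximizing monomials: writing $\sfp=\bigoplus_{i\in I}\sfp_i$ and $\sfq=\bigoplus_{j\in J}\sfq_j$, it takes the union over $(i,j)$ of the polyhedra $U_{ij}$ on which $\sfp_i$ dominates every $\sfp_k$, $\sfq_j$ dominates every $\sfq_l$, and $\sfp_i=\sfq_j$. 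You instead keep the constant $p$ on one side and exploit the max semantics of evaluation directly: $\{\sfp\leq p\}$ is ``every monomial $\leq p$'' (a polyhedron), $\{\sfp\geq p\}$ is ``some monomial $\geq p$'' (a finite union of half-spaces, using that the finite supremum is attained because the order is linear), and $\{\sfp=p\}$ is the distributed intersection. Both arguments rest on the same two facts --- $\oplus$ is a least upper bound, and linearity makes finite suprema attained --- so neither is deeper; the paper's route buys the stronger two-polynomial statement and a uniform treatment of $\geq$, while yours buys directness for the statement as given, which indeed suffices for every later use of the lemma (all of them compare a polynomial with a scalar). Notably, your decomposition of $\{\sfp=p\}$ as $\bigcup_{\alpha}\bigl(\{m_\alpha\geq p\}\cap\bigcap_{\beta}\{m_\beta\leq p\}\bigr)$ is exactly the specialized formula~\eqref{c0} that the paper re-derives in \S\ref{s-c2} for its complexity bookkeeping, and it yields the same symbol $[\,|\sfp|,|\sfp|+1\,]$. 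Your closing remark about degenerate cases is also consistent with the paper's conventions: a monomial is allowed to have coefficient $\zero$, so comparisons involving $\zero$ are still literal half-spaces, and no treatment beyond the bookkeeping you indicate is needed.
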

\begin{proof}
Since in an idempotent semiring
$u\leq v\iff u\oplus v=v$, 
it suffices to prove more generally
that when $\sfp,\sfq\in K[\Sigma]$,
$\{\sfp=\sfq\}$ is semi-polyhedral.
When $\sfp$ or $\sfq=\zero$,
$\{\sfp=\sfq\}$ is trivially semi-polyhedral.
Otherwise, we can write $\sfp$ and $\sfq$ as finite sums of monomials,
$\sfp=\bigoplus_{i\in I} \sfp_i$, and $\sfq=\bigoplus_{j\in J} \sfq_j$,
with $I,J\neq \emptyset$.
For all $(i,j)\in I\times J$, consider
the polyhedron $U_{ij}=\big(\cap_{k\in I} \{\sfp_i \geq \sfp_k\}\big)
\cap \big(\cap_{l\in J} \{\sfq_j\geq \sfq_l\}\big)
\cap \{\sfp_i=\sfq_j\} $.
Since $K$ is linearly ordered, 
and since the sum $\oplus$ is the least upper bound for $\leq$, 
$\{\sfp=\sfq\}=\cup_{i\in I,j\in J} U_{ij}$
is a semi-polyhedral set.
\end{proof}
The fact that $\{\sfp =p\}$ is semi-polyhedral was 
already noticed by De Schutter~\cite{deschutter96} (who derived
this result by modelling $\sfp=p$ as an extended linear complementarity
problem).

We now prove Theorem~\ref{th-2}
(the proof will be illustrated by the examples in \S\ref{sec-ex}).
The discussion following the proof of
Lemma~\ref{lemma-2} shows that
the rational series $\sfS\in K[\Sigma][[X]]$
can be represented as~\eqref{ratu2}. Let $F(\sfS)$
denote the set of couples of integers $(\kappa,c)$
for which $\sfS$ has such a representation.
The rational identities~\eqref{eq-cyclic},\eqref{ur} imply that
$(\kappa,c)\in F(\sfS)\implies (\kappa,ck)\in F(\sfS)$ for all
$k\geq 1$. Similarly, 
the rational identity~\eqref{ur} shows that
$(\kappa,c)\in F(\sfS)\implies (k,c)\in F(\sfS)$,
for all $k\geq \kappa$. The same argument can be applied
to the set $F'(S)$ of couples of integers
$(\kappa,c)$ for which the rational series
$S\in K[[X]]$ has a representation of the form~\eqref{ratu3}.
Hence, $F(\sfS)\cap F'(S)\neq \emptyset$, 
which allows us to assume that $\sfS$ and $S$
are given by~\eqref{ratu2} and~\eqref{ratu3},
where $\kappa$ and $c$ are the same in both formul\ae.

By Lemma~\ref{lem-3},
$\{\sfS=S\}= \cap_{0\leq j\leq c-1} 
\{\sfS^{(j,c)}=S^{(j,c)}\}$.
Since the intersection of semi-polyhedral
sets is semi-polyhedral,
and since the series $\sfS^{(j,c)}$ and
$S^{(j,c)}$ have expressions of the form~\eqref{ratu2}
and~\eqref{ratu3}, respectively, but with
$c=1$, it suffices to show Theorem~\ref{th-2} when $c=1$.
Moreover, thanks to Lemma~\ref{lem-4},
$\{\sfS=S\}=\cap_{0\leq k\leq \kappa-1} \{\<\sfS,X^k>=\<S,X^k>\}
\cap \{X^{-\kappa}\sfS = X^{-\kappa} S\}$. By Lemma~\ref{lem-7},
the sets $\{\<\sfS,X^k>=\<S,X^k>\}$ are semi-polyhedral,
hence, using again the closure of semi-polyhedral
sets by intersection, it suffices to show that 
$\{X^{-\kappa}\sfS = X^{-\kappa} S\}$
is semi-polyhedral. 
The series $X^{-\kappa}\sfS$ and $X^{-\kappa} S$
again have expressions of the form~\eqref{ratu2}
and~\eqref{ratu3}, respectively, but with $\kappa=0$,
i.e. with $\sfp=\zero$ and $p=\zero$. Summarizing,
it remains to prove Theorem~\ref{th-2}
when
\begin{eqnarray}
\sfS &=& \bigoplus_{1\leq i\leq r} \sfu_i (\sfq_i X)^* \quad\mrm{and}\label{eq-convex}\\
S & = & u (qX)^* \enspace .\label{eq-24}
\end{eqnarray}
It is easy to eliminate the case where
$u=\zero$. Then, by Lemma~\ref{lem-7},
$\{\sfS=S\}=\{\bigoplus_{1\leq i\leq r}\sfu_i=\zero\}$
is semi-polyhedral. When $q=\zero$,
$\{\sfS=S\}=\{\sfS=uX^0\}=\{\<\sfS,X^0>=u\}\cap \{X^{-1}\sfS=\zero\}$
is semi-polyhedral. Thus, in the sequel, we shall
assume that $u,q\neq \zero$.

The reduction to~\eqref{eq-convex}
leads us to studying special series of this
form. We call {\em line} a series of the form $T=u(qX)^*$,
where $u,q\in K\setminus\{0\}$,
and we say that a series $T\in K[[X]]$ 
is {\em convex} if it is a
finite sum of lines. When $K=\qmax$,
$T$ is a line if, and only if, $k\mapsto \<T,X^k>$
is an ordinary (discrete, half-)line,
and $T$ is convex if, and only if,
$k\mapsto \<T,X^k>$ is a finite
supremum of lines.
Convex series already arose
in~\cite{gbc} (where it was shown that
the minimal dimension of realization
of a convex series can be computed in polynomial time,
but here, we must find {\em all} convex
realizations of~\eqref{eq-24}).
Lines can be easily compared:
\begin{lemma}
Let $u,q,v,w\in K$. Then, $v(wX)^*\leq 
u(qX)^*\implies v=\zero$
or ($v\leq u$ and $w\leq q$).
\label{lem-8}
\end{lemma}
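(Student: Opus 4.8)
The plan is to reduce the claimed inequality between series to a coefficientwise inequality and then read off the conclusion directly from the archimedian hypothesis. First I would compute the coefficients of the two lines: since $u(qX)^* = \bigoplus_{k\geq 0} uq^k X^k$, we have $\<u(qX)^*,X^k> = uq^k$, and likewise $\<v(wX)^*,X^k> = vw^k$. Because the order $\leq$ on $K[[X]]$ is the coefficientwise order, the hypothesis $v(wX)^*\leq u(qX)^*$ is therefore \emph{equivalent} to the family of scalar inequalities $vw^k \leq uq^k$ for every $k\geq 0$.

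Next I would assume $v\neq\zero$ and aim to derive both $v\leq u$ and $w\leq q$. Evaluating the coefficient inequality at $k=0$ gives $v=v\unit \leq u\unit = u$, which settles the first assertion. For the second, I would rewrite $vw^k\leq uq^k$ in the form $uq^k \geq vw^k$ for all $k\geq 0$, which is precisely the premise of the archimedian axiom with $\lambda=q$ and $\mu=w$ (and with the coefficients $u$ and $v$ in their stated roles). Since $K$ is archimedian, this forces $v=\zero$ or $q\geq w$; as we have assumed $v\neq\zero$, we conclude $w\leq q$, completing the proof.

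I do not expect a genuine obstacle here: the lemma is essentially a transcription of the archimedian axiom into the language of lines, so the only points needing care are the bookkeeping of which quantity plays the role of $\lambda$ and which plays $\mu$ in that axiom, and the trivial base case $k=0$ that yields $v\leq u$. The value of the statement lies further downstream: it makes lines, and hence convex series, rigidly comparable, so that domination of one line by another pins down both the coefficient $v$ (via $v\leq u$) and the ratio $w$ (via $w\leq q$). This rigidity is what I would expect to exploit afterwards to describe $\{\sfS=S\}$ as a finite union of polyhedra.
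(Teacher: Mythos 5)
Your proof is correct and follows essentially the same route as the paper's: reduce the series inequality to the coefficientwise inequalities $vw^k\leq uq^k$, take $k=0$ to get $v\leq u$, and invoke the archimedian axiom (with $\lambda=q$, $\mu=w$) to get $w\leq q$ when $v\neq\zero$. The only difference is that you spell out the identification of roles in the archimedian axiom explicitly, which the paper leaves implicit.
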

\begin{proof}
The inequality $v(wX)^*\leq 
u(qX)^*$ means that $vw^k\leq uq^k$,
for all $k\geq 0$.
If $v\neq \zero$, the archimedian
property implies that $w\leq q$. Moreover,
taking $k=0$, we get $v\leq u$.
\end{proof}
We shall need the following refinement
of the archimedian condition.
\begin{lemma}\label{arch}
For all $\alpha,\beta,\gamma,\delta\in K$, 
\begin{equation}
(\alpha < \beta\;\mrm{and}\; \delta\neq \zero)
\implies \gamma\alpha^k < \delta \beta^k \;\mrm{for $k$ large
enough.}\label{eq-arch2}
\end{equation}
\end{lemma}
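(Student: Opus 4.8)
The plan is to reduce the statement to the archimedian axiom, whose contrapositive yields a single favorable index, and then to propagate that one strict inequality to all large $k$ by exploiting cancellativity.

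First I would record two elementary consequences of the hypotheses. Since $\zero\oplus v=v$ for all $v$, the order $\leq$ has $\zero$ as least element, so $\alpha<\beta$ forces $\beta\neq\zero$; and taking $u'=\zero$ in the cancellation law shows $K$ has no zero divisors, whence $\delta\beta^k\neq\zero$ for every $k$ (as $\delta,\beta\neq\zero$). These already dispose of the degenerate cases: if $\gamma=\zero$ or $\alpha=\zero$, then $\gamma\alpha^k=\zero<\delta\beta^k$ for all $k\geq 1$, which suffices. So from now on I assume $\gamma\neq\zero$ and $\alpha\neq\zero$.

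The key auxiliary fact is that multiplication by a nonzero element is \emph{strictly} monotone: if $x<y$ and $z\neq\zero$, then $x\leq y$ gives $xz\leq yz$ (compatibility of $\leq$ with $\otimes$, via distributivity), while $xz=yz$ would give $x=y$ by cancellation; hence $xz<yz$. Granting this, I would apply the archimedian axiom with $u=\gamma$, $\lambda=\alpha$, $v=\delta$, $\mu=\beta$: its conclusion ``$\delta=\zero$ or $\alpha\geq\beta$'' is false (we have $\delta\neq\zero$, and $\alpha<\beta$ gives $\alpha\not\geq\beta$ by linearity of the order), so its hypothesis must fail, producing an index $k_0$ with $\gamma\alpha^{k_0}<\delta\beta^{k_0}$.

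It then remains to upgrade this single index to all $k\geq k_0$, which I would do by a one-step induction. Assuming $\gamma\alpha^{k}<\delta\beta^{k}$, multiplying by the nonzero element $\alpha$ gives $\gamma\alpha^{k+1}<\delta\beta^{k}\alpha$ by strict monotonicity, and then $\alpha\leq\beta$ together with ordinary monotonicity gives $\delta\beta^{k}\alpha\leq\delta\beta^{k+1}$; combining, $\gamma\alpha^{k+1}<\delta\beta^{k+1}$. Hence $\gamma\alpha^k<\delta\beta^k$ for all $k\geq k_0$, that is, for $k$ large enough, which is~\eqref{eq-arch2}. I expect the obstacle to be conceptual rather than computational: the archimedian axiom only delivers one good index, and the heart of the matter is that cancellativity is precisely what allows strict inequalities to be multiplied, so that a lone strict inequality propagates forever; without it one would recover only the non-strict bound $\gamma\alpha^k\leq\delta\beta^k$ and the refinement could fail.
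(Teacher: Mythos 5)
Your proof is correct and takes essentially the same route as the paper's: both use linearity of the order to restate the archimedian axiom as the existence of a single index $k_0$ with $\gamma\alpha^{k_0}<\delta\beta^{k_0}$, and then propagate that strict inequality to all larger $k$ by induction, with cancellativity guaranteeing strictness is preserved. The only difference is cosmetic: you multiply by $\alpha$, which forces the separate treatment of the degenerate cases $\gamma=\zero$ or $\alpha=\zero$, whereas the paper multiplies by $\beta$ (automatically nonzero since $\beta>\alpha\geq\zero$) and thereby avoids any case split.
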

\begin{proof}
Since $K$ is linearly
ordered, the archimedian condition
means precisely that 
\begin{equation}
(\alpha < \beta\;\mrm{and}\; \delta\neq \zero)
\implies \gamma\alpha^k < \delta \beta^k \;\mrm{for some $k$.}
\label{eq-arch3}
\end{equation}
Multiplying the inequality $\gamma\alpha^k\leq \delta\beta^k$
by $\beta$, we get $\gamma\alpha^{k+1}\leq
\gamma\alpha^k\beta\leq \delta\beta^{k+1}$. 
Moreover, since $K$
is cancellative and $\beta\neq \zero$
(because $\beta>\alpha\geq \zero$),
$\gamma\alpha^k\beta= \delta\beta^{k+1}$
would imply $\gamma\alpha^k= \delta\beta^{k}$,
which contradicts~\eqref{eq-arch3}. Hence,
$\gamma\alpha^{k+1}< \delta\beta^{k+1}$,
and after an immediate induction on $k$,
we get~\eqref{eq-arch2}.
\end{proof}

The final, critical, step of the proof 
of Theorem~\ref{th-2} is
an observation, which, when specialized
to $K=\qmax$, is
a geometrically obvious fact about ordinary piecewise affine
convex maps. 
\begin{lemma}\label{convex}
Let $u,q\in K$, $S=u(qX)^*$,
$u_1,\ldots,u_r,q_1,\ldots,q_r\in K$,
$T_i=u_i(q_iX)^*$, and $T=\bigoplus_{1\leq i\leq r} T_i$.
Then, $T=S$ if, and only if, 
$T_i \leq S$
for all $1\leq i \leq r$, and 
$T_j=S$
for some $1\leq j\leq r$.
\end{lemma}
\begin{proof}
Since $\oplus$ is the least upper bound
in $K[[X]]$, if $T= S$, we have for all $1\leq i\leq r$,
$T_i\leq S$, which, by Lemma~\ref{lem-8}, means either $u_i=\zero$
or ($(u_i\leq u)$ and $(q_i\leq q)$).
Let $I=\set{1\leq i\leq r}{u_i\neq\zero}$.
We shall assume that $S\neq\zero$, i.e, that $u=\zero$
(otherwise the result is obvious).
Since $T=S$ and $S\neq\zero$, $I\neq\emptyset$.
Now, let $\overline{q}=\bigoplus_{i\in I} q_i\leq q$,
$J=\set{i\in I}{q_i=\ov q}$, and $\ov u=\bigoplus_{j\in J} u_j$.
Using~\eqref{eq-arch2},
we get 
$\<T,X^k>= \ov u \,\ov q^k$,
for $k$ large enough. Identifying this expression with
$\<S,X^k>= uq^k$, and using the archimedian condition,
we get $\ov q=q$. Cancelling $q^k$ 
in $\ov u q^k=u q^k$, we get $\ov u=u$,
and since $K$ is linearly ordered,
$\ov u=\bigoplus_{i\in J} u_i= u_j$ for some $j\in J$. Thus, $S=T_j$,
which shows that the condition of the lemma is necessary.
The condition is trivially sufficient.
\end{proof}
We now complete the proof of Theorem~\ref{th-2}.
Let $\sfS_i=\sfu_i (\sfq_i X)^*$. By Lemma~\ref{lem-8},
both $\{\sfS_i\leq S\}=\{\sfu_i=\zero\}\cup
(\{\sfu_i\leq u\}\cap \{\sfq_i\leq q\})$ 
and $\{\sfS_i=S\}=\{\sfu_i=u\}\cap\{\sfq_i=q\}$
are semi-polyhedral. Hence, by Lemma~\ref{convex},
$\{\sfS= S\}= \cup_{1\leq j\leq r} 
\big(\{\sfS_j =S\}
\cap (\cap_{i\in I, i\neq j} \{\sfS_i\leq S\}
)
\big)$
is semi-polyhedral, which concludes
the proof of Theorem~\ref{th-2}.
\section{Examples}
\label{sec-ex}
\subsection{First example}
Let us illustrate the algorithm of the proof
of Theorem~\ref{th-2} by computing
$\{\sfS=S\}$ when $K=\qmax$,
$\sfS=\sfu_1(\sfv_1 X)^* 
\oplus \sfu_2(\sfv_2 X^2)^*$,
$\Sigma=\{\sfu_1,\sfu_2,\sfv_1,\sfv_2\}$
and $S=0\oplus X(1X)^*
= 0\oplus 0X\oplus 1X^2\oplus 2X^3\oplus \cdots$
The first step of the proof consists
in putting $\sfS$ and $S$ in the forms~\eqref{ratu2}
and~\eqref{ratu3}, respectively. Here,
\begin{eqnarray*}
\sfS &=& \sfu_1 (\unit \oplus \sfv_1 X)(\sfv_1^2 X^2)^*
\oplus \sfu_2 (\sfv_2 X^2)^*\\
S &=& 0 \oplus (X \oplus 1 X^2)(2X^2)^* \enspace. 
\end{eqnarray*}
Then, $\{\sfS=S\}=\{\sfS^{(0,2)}= S^{(0,2)}\}
\cap \{\sfS^{(1,2)}= S^{(1,2)}\}$,
where the undersampled series are given
by
\begin{eqnarray*}
\sfS^{(0,2)} &=& \sfu_1 (\sfv_1^2 X)^*
\oplus \sfu_2 (\sfv_2 X)^*\\
\sfS^{(1,2)} &=& \sfu_1\sfv_1(\sfv_1^2 X)^*\\
S^{(0,2)} &=& 0 \oplus 1X \oplus 3X^2= 0\oplus 1X(2X)^*\\
S^{(1,2)} &=& (2X)^* \enspace .
\end{eqnarray*}
By Lemma~\ref{lem-8}, $\{\sfS^{(1,2)}= S^{(1,2)}\}
= \{\sfu_1\sfv_1 =0\}\cap \{\sfv_1^2=2\}$.
In $\qmax$, 
the unique solution of the equation $v_1^2=2$ is $v_1=1$,
and the unique solution of $u_1\otimes 1=0$ is
$u_1=-1$. Hence, 
\begin{equation}
\{\sfS^{(1,2)}= S^{(1,2)}\}=
\{\sfu_1=-1\}\cap \{\sfv_1=1\}
\enspace .
\label{eq-ex1}
\end{equation}
The series $S^{(0,2)}$ has an expression
of the form~\eqref{ratu3} with $\kappa=1$.
Let us give an expression~\eqref{ratu2} with the same
$\kappa$ for $S^{(0,2)}$:
\begin{eqnarray*}
\sfS^{(0,2)} &=& \sfu_1 \oplus \sfu_1\sfv_1^2X (\sfv_1^2 X)^*
\oplus \sfu_2 \oplus \sfu_2\sfv_2X(\sfv_2 X)^*
\end{eqnarray*}
Thus, $\{\sfS^{(0,2)}=S^{(0,2)}\}
=  \{\<\sfS^{(0,2)},X^0>=\<S^{(0,2)},X^0>\}
\cap
\{X^{-1}\sfS^{(0,2)}=X^{-1} S^{(0,2)}\}
= \{\sfu_1 \oplus \sfu_2 = 0\}
\cap \{ \sfu_1\sfv_1^2 (\sfv_1^2 X)^*
\oplus \sfu_2\sfv_2(\sfv_2 X)^* =
1(2X)^* \} $. Combining this with~\eqref{eq-ex1}
and using Lemma~\ref{lem-8}, we see that
$\{\sfS=S\}$ is the polyhedron
defined by
\[
\sfu_1=-1, \sfv_1=1\;\;\
\sfu_2 =0 ,\; \sfv_2 \leq 1 \enspace.
\]

\subsection{Second example}
Let $\alpha,\beta\in \Q$, and let us look for the realizations
of dimension $2$ of the series
\begin{equation}
S=X^0\oplus \alpha X^2(\beta X)^* 
\enspace. \label{eq-test}
\end{equation}
The proof of Theorem~\ref{th-alg}
requires to find a star height one
representation for the universal rational series
$\sfS_2=\sfc(\sfA X)^*\sfb$.
Such a representation can be obtained for instance
by
using the McNaughton-Yamada algorithm (see~\cite{hopcroft}, Proof of Th.~2.4),
together with the rational identities~\eqref{eq-cyclic}, or directly
from the classical graph interpretation of $\sfc(\sfA X)^*\sfb$.
Setting $\alpha_{ij}=\sfA_{ij}X$,
we easily get:
\begin{eqnarray}
\nonumber \sfS_2&= & (\sfc_2\alpha_{21}\sfb_1\oplus
\sfc_1\alpha_{12}\sfb_2)(\alpha_{11}\oplus\alpha_{22})^*\oplus 
\sfc_2\alpha_{22}^*\sfb_2 \oplus 
\sfc_1\alpha_{11}^*\sfb_1 \oplus \\
&&\alpha_{21}\alpha_{12}(\alpha_{11}\oplus \alpha_{22}
\oplus \alpha_{12}\alpha_{21})^* 
(\sfc_2\alpha_{21}\sfb_1\oplus\sfc_2\sfb_2
\oplus \sfc_1\alpha_{12}\sfb_2
\oplus \sfc_1\sfb_1)\enspace .
\label{eq-s2}
\end{eqnarray}
After applying the algorithm
of the proof of Theorem~\ref{th-2} to~\eqref{eq-s2}
(we do not reproduce
the computations, which are a bit lengthy, but straightforward),
we get that if $\alpha\leq \beta^2$,
all the realizations of $S$ are similar\footnote{We say,
as usual, that two representations $(c,A,b)$ and $(c',A',b')$
are {\em similar} if $c'=cP, A'=P^{-1}AP, b'=P^{-1}b$,
for some invertible matrix $P$. In the max-plus semiring, 
an invertible matrix is the product of a diagonal matrix by a permutation matrix (see e.g.~\cite{BICOQ} for this standard result).
Unlike in conventional algebra, max-plus minimal realizations
are in general not similar.}
to:
\[ 
c=\begin{pmatrix}
\unit & \zero
\end{pmatrix}\quad
A=\begin{pmatrix}
\zero & \alpha\\
\alpha & \beta
\end{pmatrix}\quad
b=\begin{pmatrix}
\unit \\ \zero
\end{pmatrix} \enspace.
\]
If $\alpha>\beta$, then $S$ has no two
dimensional realization. Surprisingly enough,
realizing even a simple series like~\eqref{eq-test}
is not immediate: we do not know
a simpler way to compute the set of dimension $2$ realizations
of $S$.
\subsection{Counter Example}
Let $\Sigma=\{\sfu_1,\sfu_2,\sfv_1,\sfv_2\}$.
We prove that the subset of $\qmax^4$
\begin{eqnarray}
\cals&=&\{\sfu_1(\sfv_1X)^* \oplus \sfu_2 (\sfv_2 X)^*\geq
(0X)^*\}  \nonumber \\
&=& 
\set{(u_1, v_1, u_2, v_2)}{\forall k\in \N,\; \max(u_1+kv_1,
u_2+kv_2)\geq 0 }  \label{e-max}
\end{eqnarray}
is not semi-polyhedral.
It suffices to show that the projection $A$
of $\cals\cap \{\sfv_1=\sfu_2=-1\}\cap \{\sfu_1,\sfv_2 \geq 0\}$
on the coordinates
$\sfu_1,\sfv_2$ is not semi-polyhedral. 
Let $f(k) = \max(u_1-k, -1+v_2k)$.
Specializing~\eqref{e-max}, we see that $(u_1,v_2)\in A$
if, and only if, $u_1,v_2\geq 0$ and $\min_{k\in \N} f(k) \geq 0$.
The map $f$ is decreasing from $0$ to $x=(u_1+1)/(v_2+1)$, and
increasing from $x$ to $+\infty$, therefore, 
$\min_{k\in \N} f(k) \geq 0\iff f(\floor x)\geq 0
\;\mrm{and}\;f(\ceil x)\geq 0$, which gives\footnote{We recall that $\floor{x}$ stands for the integer part of $x$ and $\ceil{x}$ is equal to $-\floor{-x}$ and is the rounding to the smallest bigger than $x$ integer}
\[
 A=\set{(u_1,u_2)}{u_1,v_2\geq 0,\;\; u_1- \floor{\frac{u_1 +1}{v_2+1}}
\geq 0, -1+v_2 
\ceil{\frac{u_1 +1}{v_2+1}} \geq 0}
\enspace .
\]

The set $A$ is depicted by the grey region on Figure~\ref{fig-cex}. 
Note that the border of this region contains an infinite number of vertices
lying on the hyperbola
$u_1v_2=1$. 
\begin{figure}
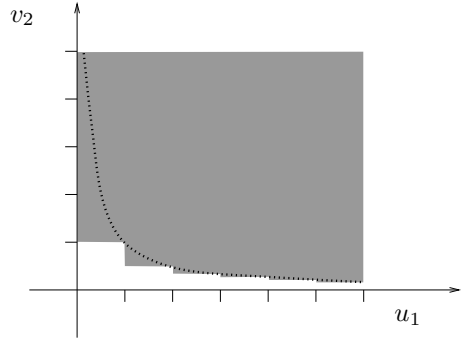

\begin{center}
\input hyperbole3
\end{center}
\caption{The set of realizations $(c,A,b)$ of dimension $2$ such that $cA^kb\geq 0$ for all $k$ is not semi-polyhedral.  A two dimensional
section of this set is represented.}
\label{fig-cex}
\end{figure}
It is geometrically obvious that $A$ is not
semi-polyhedral, but we can check it without
appealing to the figure, as follows.
For any integer $n$ the point $(n,1/n)$ belongs to the set $A$. If $A$ 
was a finite union of polyhedra, then there would
be a polyhedron $P\subseteq A$
with an infinite number of points of $(n,1/n)$ in $P$, and the low
borderline of $P$ would be the line $\{v_2=0\}$. This
is not possible, because for $v_2>0$, the point $(u_1,v_2)$ is not
in $A$, as soon as $v_2<1/(u_1+2)$.

\section{Universal Commutative Rational Expressions and Complexity Analysis}
\label{sec-universal}
In this section, we bound the complexity
of the algorithm of the proofs of Theorem~\ref{th-alg}
and~\ref{th-2}. 
Suppose we are looking for a realization of size $N$ of the series $S$
given as in~\eqref{ratu3}:
\begin{align}\label{e-s0}
S = P\oplus X^{\kappa_0c_0}
\Big(\bigoplus_{0\leq i\leq c_0-1} u_i X^{i} (q_i X^c)^* 
\Big) \enspace, 
\end{align}
where $c_0\geq 1$, $\kappa_0\geq 0$, 
$P\in K[X]$ has degree less than $\kappa_0 c_0$,
and $u_i,q_i\in K$.
A critical step of our algorithm
is to build, like we did in~\eqref{eq-s2}, a star height
one representation of the form~\eqref{ratu2}
for the universal series $\sfS=\sfS_N$:
\begin{align}
\label{e-s1}
\sfS = \sfP\oplus X^{\kappa_1c_1}
\Big(\bigoplus_{1\leq i\leq \rho} \sfu_i X^{\mu_i} (\sfq_i X^{c_1})^* 
\Big)
\end{align}
where $0\leq \mu_i\leq c_1-1$,
$\sfu_i\in \sfK$, and $\sfP\in \sfK[X]$ has degree
less than $\kappa_1 c_1$. 
In section~\S\ref{s-c1}, 
we shall give an explicit
star height one representation for $\sfS_N$
which is of independent interest. This
expression has a double exponential size.
In~\S\ref{s-c2},
we shall bound the size of an expression
of $\{\sfS=S\}$ as a union of intersection
of half-spaces, when $\sfS$ and $S$ are
given by~\eqref{e-s0} and~\eqref{e-s1},
and show that the subproblem of computing $\{\sfS=S\}$ has
a simply exponential complexity.
Finally, in section~\S\ref{s-c3}, we shall
combine the results of~\S\ref{s-c1} and~\S\ref{s-c2}
to show that the method of Theorem~\ref{th-alg}
yields a triply exponential algorithm to compute
the set of realizations of a max-plus
rational series. This triply exponential
bound is a coarse one:
trying examples by hand suggests that the naive
version of the algorithm that we analyse here could
be made much more practicable
by using linear programming and constraint programming techniques.
\subsection{Universal Commutative Rational Expressions}
\label{s-c1}
Let us associate to the triple
$\sfc\in \sfK^{1\times N},\sfA\in \sfK^{N\times N},
\sfb\in \sfK^{N\times 1}$ a digraph $G_N$ composed 
of the nodes $1,\ldots,N$, together
with an {\em input} node $\nin$
and an {\em output} node $\nout$,
arcs $j\to i$ with weights
$\sfA_{ij}X$, for $1\leq i,j\leq N$,
input arcs $\nin \to i$ with weights $\sfb_i$,
and output arcs $i\to \nout$ with weights $\sfc_i$.
The {\em weight} of a path $\pi$, denoted by $w(\pi)$,
is defined as the product of the weight
of its arcs.
We say that two circuits $\gamma $ and $\gamma'$ 
are {\em cyclic conjugates} if one is obtained
from the other by a circular permutation. 
When $\sfK$ is commutative, $w(\gamma)=w(\gamma')$.
We denote by $\calc_N$ the set of conjugacy classes
of elementary circuits of $G_N$.
Let $C\subset \calc_N$, and let $\pi$ denote a path of $G_N$. We say 
that $C$ is {\em accessible} from $\pi$ 
if the union of the circuits of $C$
and of the path $\pi$ is a connected subgraph
(we use here the {\em undirected} notion
of connectedness, not to be confused with strong
connectedness). An accessible set $C$ for a path $\pi$ looks typically
as follows:


\begin{center}
\begin{picture}(0,0)%
\includegraphics{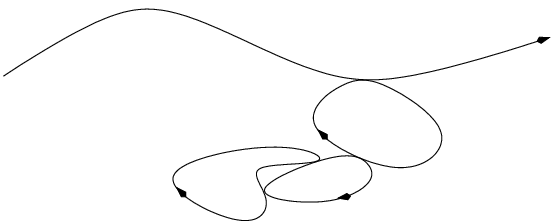}%
\end{picture}%
\setlength{\unitlength}{1973sp}%
\begingroup\makeatletter\ifx\SetFigFont\undefined%
\gdef\SetFigFont#1#2#3#4#5{%
  \reset@font\fontsize{#1}{#2pt}%
  \fontfamily{#3}\fontseries{#4}\fontshape{#5}%
  \selectfont}%
\fi\endgroup%
\begin{picture}(5274,2116)(1789,-6559)
\put(6653,-4659){\makebox(0,0)[lb]{\smash{\SetFigFont{10}{12.0}{\rmdefault}{\mddefault}{\itdefault}{\color[rgb]{0,0,0}$\pi$}%
}}}
\put(6053,-5641){\makebox(0,0)[lb]{\smash{\SetFigFont{10}{12.0}{\rmdefault}{\mddefault}{\itdefault}{\color[rgb]{0,0,0}$c_1$}%
}}}
\put(5356,-6391){\makebox(0,0)[lb]{\smash{\SetFigFont{10}{12.0}{\rmdefault}{\mddefault}{\itdefault}{\color[rgb]{0,0,0}$c_2$}%
}}}
\put(3225,-5828){\makebox(0,0)[lb]{\smash{\SetFigFont{10}{12.0}{\rmdefault}{\mddefault}{\itdefault}{\color[rgb]{0,0,0}$c_3$}%
}}}
\end{picture}

\end{center}

\noindent
We denote by $\cala(\pi)$ the set of $C\subset \calc_N$ accessible
from $\pi$.  We set $S^+=SS^*$, for all
series $s$ such that $S^*$ is well defined.
We denote by $\calp_N$ the set of elementary paths
from $\nin$ to $\nout$.
The following result is Lemma~6.2.3 from~\cite[Chap.~VII]{gaubert92a}.
\begin{proposition}\label{lem-combi-commut}
Let $\sfK$ denote a commutative idempotent semiring,
$\sfA\in \sfK^{N\times N}$,
 $\sfb\in \sfK^{N\times 1}$,
$\sfc\in \sfK^{1\times N}$,
and $\sfS_N=\sfc(\sfA X)^*\sfb$.
We have
\begin{equation}
\sfS_N  =\bigoplus_{\pi\in \calp_N} w(\pi)\left(
\bigoplus_{C\in \cala(\pi)} \bigotimes_{\gamma\in C} w(\gamma)^+
\right)\enspace .\label{e-l-c}
\end{equation}
\end{proposition}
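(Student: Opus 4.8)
The plan is to start from the combinatorial expansion of the star of a matrix and reorganize it using commutativity and idempotency. First I would record the identity, valid in any semiring, that
\[
\sfS_N = \sfc(\sfA X)^*\sfb = \bigoplus_{k\geq 0}\sfc\sfA^k\sfb\,X^k = \bigoplus_{\rho} w(\rho),
\]
the last join ranging over all directed walks $\rho$ from $\nin$ to $\nout$ in $G_N$ (not necessarily elementary), since $\sfc\sfA^k\sfb\,X^k$ is exactly the sum of the weights of the walks from $\nin$ to $\nout$ using $k$ internal arcs. Because $\sfK$ is idempotent, $\oplus$ is the least upper bound for the order $u\leq v\iff u\oplus v=v$, so it suffices to prove the two inequalities $\bigoplus_\rho w(\rho)\leq\mathrm{RHS}$ and $\mathrm{RHS}\leq\bigoplus_\rho w(\rho)$, where $\mathrm{RHS}$ denotes the right-hand side of~\eqref{e-l-c}.

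For $\bigoplus_\rho w(\rho)\leq\mathrm{RHS}$, I would fix a walk $\rho$ and decompose its arc-multiset. By a standard flow/Euler-type argument (follow $\rho$ and peel off an elementary circuit at each first repeat), the arcs of $\rho$, counted with multiplicity, split into the arcs of a single elementary path $\pi\in\calp_N$ from $\nin$ to $\nout$ together with the arcs of finitely many elementary circuits $\gamma_1,\ldots,\gamma_s$, repetitions allowed. Commutativity gives $w(\rho)=w(\pi)\bigotimes_t w(\gamma_t)$ irrespective of the order of traversal. Grouping the $\gamma_t$ by cyclic-conjugacy class, writing $C\subset\calc_N$ for the set of classes that occur and $n_\gamma\geq1$ for the multiplicity of $\gamma$, and using $w(\gamma)=w(\gamma')$ for conjugates, we get $w(\rho)=w(\pi)\bigotimes_{\gamma\in C}w(\gamma)^{n_\gamma}$. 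Since the support of $\rho$ is connected and equals the union of $\pi$ and the circuits, $C$ is accessible from $\pi$, i.e. $C\in\cala(\pi)$; and $w(\gamma)^{n_\gamma}\leq w(\gamma)^+$ for each $\gamma$, so $w(\rho)\leq w(\pi)\bigotimes_{\gamma\in C}w(\gamma)^+\leq\mathrm{RHS}$.

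For the reverse inequality I would expand each term of $\mathrm{RHS}$ by distributivity: $w(\pi)\bigotimes_{\gamma\in C}w(\gamma)^+=\bigoplus w(\pi)\bigotimes_{\gamma\in C}w(\gamma)^{n_\gamma}$, the join being over all families $(n_\gamma)_{\gamma\in C}$ with $n_\gamma\geq1$. It then suffices to realize each such monomial as the weight of a genuine walk from $\nin$ to $\nout$. Here accessibility is used essentially: since the union of $\pi$ and the circuits of $C$ is undirectedly connected, and edges only join vertices lying in a common piece, the pieces can be ordered so that each circuit shares a vertex with the union of $\pi$ and the previously listed circuits. Inserting, at such a shared vertex, the prescribed number $n_\gamma$ of traversals of each directed circuit (a detour that always returns to its entry vertex) produces a single walk $\rho$ with $w(\rho)=w(\pi)\bigotimes_{\gamma\in C}w(\gamma)^{n_\gamma}$. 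Hence each monomial of $\mathrm{RHS}$ is $\leq\bigoplus_\rho w(\rho)=\sfS_N$, and the reverse inequality follows.

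The main obstacle is the double combinatorial bookkeeping of the middle two paragraphs: showing that accessibility, i.e. undirected connectedness of the union, is exactly the condition under which circuits can be both extracted from and spliced into a single directed walk. The extraction rests on the arc-multiset decomposition of a walk into one elementary path and elementary circuits with connected support; the insertion rests on the fact that a directed circuit returns to its entry vertex, so sharing one vertex suffices to weave it in. Commutativity is what lets me ignore the order of circuits and identify cyclic conjugates, while idempotency is what lets me pass from the multiplicity-counting join $\bigoplus_\rho w(\rho)$ to the multiplicity-free expression with sets $C$ and the factors $w(\gamma)^+$; without both hypotheses neither side would equal $\sfS_N$.
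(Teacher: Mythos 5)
Your proof is correct and takes essentially the same route as the paper's: both directions are handled by, on one side, extracting from an arbitrary walk an elementary path plus elementary circuits with connected support (giving $\sfS_N\preceq$ RHS), and on the other, splicing prescribed powers of accessible circuits back into a walk (giving RHS $\preceq\sfS_N$), with commutativity and idempotency used identically. The only difference is cosmetic: where you order the circuits so that each meets the union of $\pi$ and its predecessors, the paper runs an induction $(H_k)$ that peels off one circuit touching the current (possibly non-elementary) path and recurses.
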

(By convention, $\emptyset\in \cala(\pi)$ for
all paths $\pi$, and the products in~\eqref{e-l-c}
corresponding to $C=\emptyset$ are equal to $\unit$.)

Before proving Proposition~\ref{lem-combi-commut},
it is instructive to consider the case when $N=2$.
Then, there are four paths
in the sum~\eqref{e-l-c}, 
$\pi_1=\nin\to 1\to \nout$,
$\pi_2=\nin\to 2\to \nout$,
$\pi_3=\nin\to 1\to 2\to \nout$,
$\pi_4=\nin\to 2\to 1\to \nout$,
with respective weights $\sfc_1\sfb_1$,
$\sfc_2\sfb_2$,
$\sfc_2\alpha_{21}\sfb_1$,
and $\sfc_1\alpha_{12}\sfb_2$.
We have for instance 
$\cala(\pi_1)=\{ \{1\to 1\}, \{1\to 2\to 1\},
\{1\to 2\to 1,1\to 1\},
\{1\to 2\to 1,2\to 2\},
\{1\to 2\to 1,1\to 1,2\to 2\}\}$.
Thus, the contribution of $\pi_1$ in~\eqref{e-l-c}
is
\[
\sfc_1\sfb_1(\unit \oplus
\alpha_{11}^+ \oplus (\alpha_{12}\alpha_{21})^+
\oplus \alpha_{11}^+ (\alpha_{12}\alpha_{21})^+
\oplus (\alpha_{12}\alpha_{21})^+ \alpha_{22}^+ 
\oplus \alpha_{11}^+(\alpha_{12}\alpha_{21})^+ \alpha_{22}^+ )
\]
and, considering the similar contributions
of $\pi_2,\pi_3,\pi_4$, it is easy to see that~\eqref{e-l-c}
coincides with~\eqref{eq-s2}.

\begin{proof}[Proof of Proposition~\ref{lem-combi-commut}]
Let $B$ denote the right hand side of
(\ref{e-l-c}). We shall prove
by induction on $k$ the following
property:  ($H_k$) for all (possibly non elementary)
paths $\pi$ from $\nin$ to $\nout$, 
for all sets of $k$ circuits
$C=\{\gamma_1,\ldots , \gamma_k\}\in \cala(\pi)$, 
and for all $n_1,\ldots,n_k\geq 1$,
$w(\pi)w(\gamma_1)^{n_1}\ldots w(\gamma_k)^{n_k}$ is the weight
of a path $\pi'$ from $\nin$ to $\nout$.
If $k=1$, since the graph induced by $\pi\cup \gamma_1$ is connected,
$\gamma_1$ must have a common node with $\pi$, say node $r$.
Possibly after replacing $\gamma_1$ by a cyclic conjugate,
we may assume that $r$ is the initial (and final) node of $\gamma$.
We can write $\pi=\pi_{\nout, r} \pi_{r,\nin}$
(here, and in the sequel, composition of paths is denoted by concatenation),
where $\pi_{r,\nin}$ is a path from $\nin$
to $r$, and $\pi_{\nout, r}$ is a path
from $r$ to $\nout$. Thus
$w(\pi)w(\gamma_1)^{n_1}=w(\pi_{\nout, r}\gamma_1^{n_1}\pi_{r,\nin})$
is the weight of the path $\pi'=
\pi_{\nout, r}\gamma_1^{n_1}\pi_{r,\nin}$
from $\nin$ to $\nout$, which proves $(H_1)$.
We now assume that $k\geq 2$. By definition
of $\cala(\pi)$, at least one of the circuits 
$\gamma_1,\ldots,\gamma_k$, say $\gamma_1$, has a node in common with $\pi$.
Arguing as in the proof of $(H_1)$, 
we see that $w(\pi)w(\gamma)^{n_1}$ is 
the weight of a path $\pi'$ from $\nin$ to $\nout$,
which is such that $\{\gamma_2,\ldots,\gamma_k\}\in \cala(\pi')$.
Applying $(H_{k-1})$ to $\pi'$, we get
$(H_k)$.  

Since $(H_k)$ holds for all $k$,
all the terms of the sum $B$
can be interpreted as weights of paths
from $\nin$ to $\nout$, but we know
that  $\sfS_N$
is the sum of the weights of all paths
from $\nin$ to $\nout$. Hence,
$B\preceq \sfS_N$.
Conversely, if $\pi$ is a path from $\nin$ to $\nout$, we can
write $\pi=\pi_1\gamma_1^{n_1}\pi_2\ldots \gamma_k^{n_k}\pi_{k+1}$,
where $\pi_1\pi_2\ldots \pi_{k+1}$ is an elementary path
from $\nin$ to $\nout$,
and $\gamma_1,\ldots,\gamma_k$ are elementary circuits
which form an accessible set for $\pi$.
This implies that $w(\pi)\preceq B$, and since this holds
for all $\pi$, $\sfS_N\preceq B$. 
\end{proof}
We tabulate the size of the sets
determining the size of the expression~\eqref{e-l-c},
for further use. We denote by $\# X$ the cardinality
of a set $X$.  It is easy to check that 
\begin{align}
\# \calp_N =\sum_{i=0}^N \frac{N!}{i!}
\leq e N!=\calo(N!) \enspace,
\end{align}
and that
\[
\# \calc_N = \sum_{i=1}^N \frac{N!}{(N-i)! i!} \enspace.
\]
The $\calc_N$ are called {\em logarithmic numbers},
their exponential generating function,
$\sum_{N\geq 1} (N!)^{-1}\calc_N z^N$,
is equal to $-\log(1-z)\exp(z)$.
Using for instance a singularity analysis~\cite[Th.~2]{FlOd90b},
we get
\begin{align}
\# \calc_N = \calo((N-1)!\log N) \enspace .
\end{align}
We have of course $\#C \leq \#\calc_N$, and $\#\cala(\pi)
\leq 2^{\#\calc_N}$, for all $C\in \calc_N$ and
$\pi\in \calp_N$.
\subsection{Computing $\{\sfS=S\}$}
\label{s-c2}
We now embark in the complexity analysis of the algorithm contained
in the proof of Theorem~\ref{th-alg}. This analysis involves
a tedious but conceptually simple bookeeping: we 
bound the number of polyhedral sets appearing when
expressing that the star-height one rational expression~\eqref{e-l-c}
evaluates to a given rational series.

If $\sfp\in K[\Sigma]$, we denote by $|\sfp|$ the
number of monomials which appear in $\sfp$
(for instance, if $K=\qmax$, $\Sigma=\{a,b\}$,
$\sfp=\unit \oplus 2a^2\oplus ab\oplus 7b$, $|\sfp|=4$).
We consider the series $\sfS$ and $S$ given by~\eqref{ratu2}
and~\eqref{ratu3}, respectively, with $\sfK=K[\Sigma]$,
and we set
\[
m=\max(\max_{0\leq i<\kappa c} |\<\sfp,X^i>|,
\max_{1\leq i\leq \rho}\max(|\sfu_i|,|\sfq_i|))
\]
and $\rho_i=\#\set{1\leq j\leq
\rho}{\mu_j=i}$.
\begin{proposition}\label{prop-complex}
Let $\sfS$ and $S$ be given by~\eqref{ratu2} and~\eqref{ratu3}, respectively. 
The set $\{\sfS =S\}$ can be expressed as the union
of at most $m^{\kappa c+2c}(\prod_{0\leq i<c}\rho_i)2^{\rho-c}$
intersections of at most $(m+1)\kappa c + 2c+2\rho m$ half-spaces.
\end{proposition}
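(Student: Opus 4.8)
The plan is to run the algorithm underlying the proof of Theorem~\ref{th-2} step by step, attaching to every semi-polyhedral set it produces a pair $(P,H)$, where $P$ bounds the number of polyhedra in the union and $H$ bounds the number of half-spaces in each of them, and then to propagate $(P,H)$ through the operations performed by the algorithm. The two elementary bookkeeping rules I will use repeatedly are: if $U=\cup_a U_a$ and $V=\cup_b V_b$ have complexities $(P_U,H_U)$ and $(P_V,H_V)$, then the intersection $U\cap V=\cup_{a,b}(U_a\cap V_b)$ has complexity $(P_UP_V,\,H_U+H_V)$ (the polyhedra counts multiply, their sizes add), while the union $U\cup V$ has complexity $(P_U+P_V,\,\max(H_U,H_V))$.

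First I will record the atomic estimates coming from Lemma~\ref{lem-7}. For a nonzero constant $p\in K$ and a polynomial $\sfr\in K[\Sigma]$ with at most $m$ monomials, the proof of Lemma~\ref{lem-7} expresses $\{\sfr=p\}$ as the union, over the choice of a dominant monomial of $\sfr$, of the polyhedron cut out by requiring that monomial to be $\geq p$ and every monomial of $\sfr$ to be $\leq p$; this gives complexity $(m,\,m+1)$. Likewise $\{\sfr\leq p\}$ is a single intersection of at most $m$ half-spaces, of complexity $(1,\,m)$, and $\{\sfr=\zero\}=\{\sfr\leq\zero\}$ again has complexity $(1,\,m)$. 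Degenerate cases (a coefficient equal to $\zero$) only decrease these counts, so the generic bounds suffice throughout.

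Next I will follow the reduction of Theorem~\ref{th-2}. By Lemma~\ref{lem-3}, $\{\sfS=S\}=\bigcap_{0\leq j<c}\{\sfS^{(j,c)}=S^{(j,c)}\}$; undersampling at residue $j$ sends the polynomial part $\sfP$ of $\sfS$ to a polynomial of degree $<\kappa$ and retains exactly the $\rho_j$ lines $\sfu_i(\sfq_iX)^*$ with $\mu_i=j$, while $S^{(j,c)}$ reduces to a single line $u_j(q_jX)^*$ plus a polynomial. For each $j$, Lemma~\ref{lem-4} splits $\{\sfS^{(j,c)}=S^{(j,c)}\}$ into the $\kappa$ transient coefficient equations and one ultimate equation. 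Since $kc+j<\kappa c$ for $0\leq k<\kappa$, each transient equation has the form $\{\<\sfP,X^{kc+j}>=\<P,X^{kc+j}>\}$ with left side of at most $m$ monomials and right side a constant, hence complexity $(m,\,m+1)$; intersecting the $\kappa$ of them gives $(m^\kappa,\,\kappa(m+1))$. For the ultimate equation I invoke Lemma~\ref{convex}: writing $\sfS_i=\sfu_i(\sfq_iX)^*$, it equals the union over the $\rho_j$ indices $i$ of $\{\sfS_i=S\}\cap\bigcap_{i'\neq i}\{\sfS_{i'}\leq S\}$. By Lemma~\ref{lem-8} and the atomic estimates, $\{\sfS_i=S\}=\{\sfu_i=u_j\}\cap\{\sfq_i=q_j\}$ has complexity $(m^2,\,2m+2)$, and each $\{\sfS_{i'}\leq S\}=\{\sfu_{i'}=\zero\}\cup(\{\sfu_{i'}\leq u_j\}\cap\{\sfq_{i'}\leq q_j\})$ has complexity $(2,\,2m)$; intersecting the $\rho_j-1$ latter sets and then taking the union over $i$ yields $(\rho_j m^2 2^{\rho_j-1},\,2m\rho_j+2)$ for the ultimate part.

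Finally I will assemble the counts. Intersecting transient and ultimate parts, $\{\sfS^{(j,c)}=S^{(j,c)}\}$ has complexity $(\rho_j m^{\kappa+2}2^{\rho_j-1},\,\kappa(m+1)+2m\rho_j+2)$, and intersecting over the $c$ residues gives, using $\sum_{0\leq j<c}\rho_j=\rho$ so that $\sum_j(\rho_j-1)=\rho-c$, the complexity $\big((\prod_{0\leq j<c}\rho_j)\,m^{(\kappa+2)c}\,2^{\rho-c},\ (m+1)\kappa c+2\rho m+2c\big)$, which is exactly the claimed bound since $(\kappa+2)c=\kappa c+2c$. The argument is routine once the reduction of Theorem~\ref{th-2} is fixed; the only delicate point — and the place where an error is easiest to make — is the per-operation accounting, namely pinning down the atomic complexities $(m,m+1)$ and $(2,2m)$ and then correctly distributing the global data $(\kappa,\rho)$ into the per-residue data $(\kappa,\rho_j)$, so that the products of the $\rho_j$-factors from the convex unions and of the $2^{\rho_j-1}$-factors from the $\bigcap_{i'\neq i}$ collapse to the stated exponents $\prod_j\rho_j$ and $2^{\rho-c}$.
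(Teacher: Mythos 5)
Your proof is correct and follows essentially the same route as the paper: the pair $(P,H)$ bookkeeping is exactly the paper's symbol calculus $[n,k]$ with $\scup$ and $\scap$, the atomic estimates $(m,m+1)$, $(1,m)$, $(m^2,2m+2)$, $(2,2m)$ coincide with the paper's bounds derived from Lemmas~\ref{lem-7} and~\ref{lem-8}, and the use of Lemma~\ref{convex} for the ultimate parts matches the paper's key step. The only (immaterial) difference is organizational — you group the $\kappa c$ transient equations residue by residue via Lemmas~\ref{lem-3} and~\ref{lem-4}, whereas the paper intersects all of them at once before handling the $c$ ultimate equations — and the totals agree.
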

To show Proposition~\ref{prop-complex},
we need to introduce some adapted notation.
We say that a couple of positive integers
$[n,k]$ is a {\em symbol} of a subset $\cals$ of $K^{\Sigma}$,
and we write $\cals\in [n,k]$,
if $\cals$
can be written as the union of $n$ sets, $\cals=\cup_{1\leq i\leq n} \cals_i$,
where each $\cals_i$ is the intersection of at most $k$ half-spaces.
Of course, $\cals\in [n,k]\implies \cals\in [n',k']$, for
all $n'\geq n,k'\geq k$.
For instance, taking $\sfp=\bigoplus_{i\in I} \sfp_i\in K[\Sigma]$
and $p\in K$ as in Lemma~\ref{lem-7}, and specializing
the proof of Lemma~\ref{lem-7}, we have
\begin{equation}
\{\sfp=p\}= \bigcup_{i\in I} ( \{\sfp_i\leq p\}
\cap \{\sfp_i\geq p\}\cap \bigcap_{j\in I\atop j\neq i} \{\sfp_j\leq p\})
\enspace.
\label{c0}
\end{equation}
Since, by definition, $|\sfp|=\# I$
we get from~\eqref{c0}:
\begin{equation}
\{\sfp=p\}\in [|\sfp|,|\sfp|+1] \enspace .
\label{c1}
\end{equation}

Similarly, since $\{\sfp\leq p\}=\cap_{i\in I}
\{\sfp_i\leq p\}$,
\begin{equation}
\{\sfp\leq p\}\in [1,|\sfp|] \enspace .
\label{c1bis}
\end{equation}

It will be convenient to equip symbols with the binary
laws $\scup$ and $\scap$, defined by:
\[
[n,k]\scup [n',k']= [n+n',\max(k,k')],\qquad
[n,k]\scap [n',k']= [nn',k+k'] \enspace.
\]
This notation is motivated by the following rule,
which holds for all subsets $\cals, \cals' \subset K^{\Sigma}$:
\begin{eqnarray}\nonumber
(\cals \in [n,k]\;\mrm{and}\;\cals '\in [n',k'])
&\implies& 
(\cals \cup \cals ' \in [n,k]\scup [n',k']\\
&& \;\mrm{and}\;\;
\cals \cap \cals '\in [n,k]\scap [n',k'] )\enspace .
\end{eqnarray}

\begin{proof}[Proof of Proposition~\ref{prop-complex}]
As a preliminary step, we compute symbols for
the more elementary sets involved in the proof of Theorem~\ref{th-2}.

First, if $\sfu,\sfq\in K[\Sigma]$
and $u,q\in K$, we get from
Lemma~\ref{lem-8},
$\{\sfu(\sfq X)^*\leq u(qX)^*\}=
\{\sfu=\zero \}\cup (\{\sfu\leq u\} \cap \{\sfq\leq q\})$, hence
\begin{equation}
\{\sfu(\sfq X)^*\leq u(qX)^*\}
\in 
[1,|\sfu|] \scup ([1,|\sfu|] \scap [1,|\sfq|])
= [2,|\sfu|+|\sfq|]
\label{c2}
\enspace .
\end{equation}
Moreover, Lemma~\ref{lem-8} shows that $\{\sfu(\sfq X)^*=u(qX)^*\}
=\{\sfu=u\}\cap\{\sfq=q\}$, if $u\neq \zero$.
When $u=\zero$, $\{\sfu(\sfq X)^*=
u(qX)^*\}=\{\sfu=\zero\}=\{\sfu\leq\zero\}$. 
Hence, using~\eqref{c1},~\eqref{c1bis}, we get
\begin{align}
\{\sfu(\sfq X)^*= u(qX)^*\}
\in 
\begin{cases}{}
{}[1,|\sfu|]  \ \ \ \mrm{if $u=\zero$} &\\
[|\sfu|,|\sfu|+1] \scap 
[|\sfq|,|\sfq|+1] = [|\sfu||\sfq|, |\sfu|+|\sfq|+2]&\mrm{else}\\
\end{cases}
\label{c3}
\end{align}

Let us now take $\sfu_1,\ldots,\sfu_r,
\sfq_1,\ldots,\sfq_r\in K[\Sigma]$, $u,q\in K$,
$\sfT_i=\sfu_i(\sfq_i X)^*$, $S =u(qX)^*$.
Lemma~\ref{convex} shows that
\begin{equation}
\Big\{\bigoplus_{1\leq i\leq r} \sfT_i = S\Big\}
= 
\bigcup_{1\leq i\leq r} \big( \{\sfT_i= S\}
\cap
\bigcap_{1\leq j\leq r\atop j\neq i} \{\sfT_j\leq S\}\big)
\enspace,
\end{equation}
hence, using~\eqref{c2} and~\eqref{c3}
\begin{eqnarray}\nonumber
\Big\{\bigoplus_{1\leq i\leq r} \sfT_i = S \Big\}
&\in& \scup_{1\leq i\leq r}
\big(
[|\sfu_i||\sfq_i|,|\sfu_i|+|\sfq_i|+2]
\scap \scap_{1\leq j\leq r\atop j\neq i}
[2, |\sfu_j|+|\sfq_j|]\big)\\
&=& 
(\sum_{1\leq i\leq r} 
|\sfu_i||\sfq_i| 2^{r-1},
2+ \sum_{1\leq i\leq r} (|\sfu_i|+|\sfq_i|))
\label{c4}
\end{eqnarray}
The proof of Theorem~\ref{th-2},
together with~\eqref{c4},~\eqref{c0},
shows that 
\begin{eqnarray}
\nonumber
\{\sfS=S\} &=& \bigcap_{0\leq i<\kappa c} \{\sfp_i=p_i\}\cap 
\bigcap_{0\leq i< c} \Big\{\bigoplus_{1\leq j\leq \rho\atop \mu_j=i}
\sfu_j(\sfq_j X)^* =u_i (q_i X)^*\Big\}\\
&\in& 
\scap_{0\leq i<\kappa c} [m,m+1]
\scap \scap_{0\leq i< c}
[\rho_i m^2 2^{\rho_i-1}, 2+ 2\rho_i m]\nonumber\\
&=& [m^{\kappa c+2c}(\prod_{0\leq i<c}\rho_i)2^{\rho-c},
(m+1)\kappa c + 2c+2\rho m]
\enspace ,
\label{c7}
\end{eqnarray}
which concludes the proof.
\end{proof}

\subsection{Final Complexity Analysis}
\label{s-c3}

Let $\sfK=K[\Sigma]$ and $E$ be a formal expression of a polynomial
$\sfP \in \sfK[X]$. We denote by $m(E)$ the maximum
number of monomials of $K[\Sigma]$
arising as a coefficient of a power of $X$ in some polynomial
expression of $E$. By abuse of
notation we will write $m(\sfP)$ instead of $m(E)$. 
For instance, with $\Sigma=\{a,b\}$
and $K=\qmax$, if $\sfS=7aX\oplus 3abX
\oplus X^2(\unit \oplus 8a^2b X)(3X^2)^*$, $m(\sfS)=2 (=|\<\sfP,X>|=
|7a\oplus 3ab|)$.
If the expression is~\eqref{ratu2}:

\begin{equation}
\label{e-d-l}
m(\sfS)=\max(\max_{0\leq i<\kappa c} |\<\sfP,X^i>|,
\max_{1\leq i\leq \rho}\max(|\sfu_i|,|\sfq_i|))
\end{equation}

\begin{corollary}
\label{cor-univ-expr}
Let $\sfK=K[\Sigma]$, 
$\sfA\in \sfK^{N\times N}$,
 $\sfb\in \sfK^{N\times 1}$,
$\sfc\in \sfK^{1\times N}$,
and $\sfS_N=\sfc(\sfA X)^*\sfb$.
Then  $\sfS_N$ can be written as in \ref{ratu2}:

\begin{equation}
\label{eq-univ-expr}
\sfS_N = \sfP\oplus X^{\kappa_1 c_1}
\Big(\bigoplus_{1\leq i\leq \rho} \sfu_i X^{\mu_i} (\sfq_i X^{c_1})^* 
\Big)
\end{equation}
where $c_1=N!$, $\kappa_1=\mathcal{O}(N!)$, $\rho=2^{\mathcal{O}(N!)}$,
$0\leq \mu_i\leq c_1 -1$, 
$\sfu_i\in \sfK$, $m(\sfS_N)=2^{\mathcal{O}(N!)}$ and $\sfP\in
\sfK[X]$ has degree smaller than $\kappa_1c_1$. 
\end{corollary}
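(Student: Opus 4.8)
The plan is to take the explicit star height one representation of $\sfS_N$ provided by Proposition~\ref{lem-combi-commut}, namely
\[
\sfS_N = \bigoplus_{\pi\in \calp_N} w(\pi)\Big(
\bigoplus_{C\in \cala(\pi)} \bigotimes_{\gamma\in C} w(\gamma)^+
\Big),
\]
and convert it into the normal form~\eqref{ratu2} by pushing the products of starred terms $w(\gamma)^+$ through the commutative rational identities of Lemma~\ref{lemma-2}. The crucial structural fact is that each weight $w(\gamma)$ of an elementary circuit $\gamma$ has the form $\sfq_\gamma X^{\ell(\gamma)}$, where $\ell(\gamma)$ is the (circuit) length, so a product $\bigotimes_{\gamma\in C} w(\gamma)^+$ is a product of stars of monomials in $X$ whose $X$-exponents are the $\ell(\gamma)$. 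The plan is therefore to first read off the period $c_1$, then bound the transient $\kappa_1$, then bound the number of generalized terms $\rho$, and finally bound the coefficient complexity $m(\sfS_N)$, treating each quantity separately.

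First I would establish $c_1=N!$. Since circuit lengths $\ell(\gamma)$ lie between $1$ and $N$, applying identity~\eqref{eq-cyclic} to rewrite each $(\sfq_\gamma X^{\ell(\gamma)})^*$ over the common period $c_1=\lcm(1,2,\ldots,N)$ would be natural, but to get a clean factorial bound the plan is to take $c_1=N!$ (a common multiple of all $\ell(\gamma)\le N$); every star of period $\ell(\gamma)$ can be expanded to period $c_1$ via~\eqref{eq-cyclic}, and a product of such stars collapses to a single star of period $c_1$ by the sum identity~\eqref{eq-1}, $(\sfU\oplus\sfV)^*=\sfU^*\sfV^*$, read backwards. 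Next, for $\kappa_1=\calo(N!)$: the transient part $\sfP$ and the offset $\kappa_1 c_1$ come from the polynomial ``prefix'' one peels off when reducing a product of stars to the form $\sfu_i X^{\mu_i}(\sfq_i X^{c_1})^*$; since each elementary path has length $\le N$ and the number of circuits in any accessible set $C$ is at most $\#\calc_N$, the accumulated prefix length is polynomially controlled by $N\cdot c_1$, giving $\kappa_1=\calo(N!)$. For $\rho=2^{\calo(N!)}$ and $m(\sfS_N)=2^{\calo(N!)}$, the plan is to invoke the cardinality estimates already tabulated after Proposition~\ref{lem-combi-commut}: $\#\calp_N=\calo(N!)$, $\#\calc_N=\calo((N-1)!\log N)$, and $\#\cala(\pi)\le 2^{\#\calc_N}=2^{\calo(N!)}$, so the outer sum has $\calo(N!)\cdot 2^{\calo(N!)}=2^{\calo(N!)}$ terms, and each contributes boundedly many generalized terms after normalization, yielding $\rho=2^{\calo(N!)}$; the coefficient count $m(\sfS_N)$ is governed by~\eqref{e-d-l} and inherits the same double exponential bound since the number of monomials in any single coefficient is at most the number of contributing $(\pi,C)$ pairs.

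The hard part will be controlling the bookkeeping in the reduction of $\bigotimes_{\gamma\in C} w(\gamma)^+$ to period-$c_1$ normal form without the term count $\rho$ and the coefficient count $m(\sfS_N)$ exploding beyond $2^{\calo(N!)}$. The delicate point is that expanding $(\sfq_\gamma X^{\ell(\gamma)})^*$ to period $c_1$ via~\eqref{eq-cyclic} multiplies in a polynomial factor of degree $c_1/\ell(\gamma)$, and forming products over $C$ naively could multiply these factors together; the plan is to argue that, because all circuits in a product share the single collapsed star $(\sfq X^{c_1})^*$ with $\sfq=\bigotimes_{\gamma\in C}\sfq_\gamma^{\,c_1/\ell(\gamma)}$ after using~\eqref{eq-1}, only the residues $\mu_i\bmod c_1$ of the finitely many prefix monomials survive, so the per-term blowup is a factor polynomial (not exponential) in $c_1=N!$. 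Absorbing that polynomial factor into the $2^{\calo(N!)}$ already present gives the stated bounds, so the corollary will follow by combining these four estimates; I expect the verification that the $\calo(\cdot)$ constants in $\kappa_1$, $\rho$, and $m(\sfS_N)$ truly remain in the exponent (rather than leaking into a worse tower) to be the genuinely technical obstacle.
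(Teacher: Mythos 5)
Your skeleton is exactly the paper's: start from Proposition~\ref{lem-combi-commut}, lift every circuit star to the common period $c_1=N!$ via~\eqref{eq-cyclic}, collapse each product of stars via~\eqref{eq-1}, peel off the transient via~\eqref{ur}, and feed in the cardinality estimates $\#\calp_N=\calo(N!)$, $\#\calc_N=\calo((N-1)!\log N)$, $\#\cala(\pi)\leq 2^{\#\calc_N}$. However, the substance of this corollary is precisely the quantitative bookkeeping, and two of your key accounting steps are wrong. First, the collapse identity is misapplied: \eqref{eq-1} read backwards says $\sfU^*\sfV^*=(\sfU\oplus\sfV)^*$, so the product $\bigotimes_{\gamma\in C}(\sfq_\gamma^{c_1/\ell(\gamma)}X^{c_1})^*$ collapses to $\bigl(\bigl(\bigoplus_{\gamma\in C}\sfq_\gamma^{c_1/\ell(\gamma)}\bigr)X^{c_1}\bigr)^*$, a \emph{sum} of up to $\#\calc_N$ monomials, not the single monomial $\bigotimes_{\gamma\in C}\sfq_\gamma^{c_1/\ell(\gamma)}$ you wrote; with the product the expression no longer equals $\sfS_N$. (This sum is exactly why the paper gets $m(\sfq_j)\leq\#\calc_N=\calo(N!)$ rather than $1$.)

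Second, your proposed resolution of what you correctly identify as the hard part does not work. Residue-merging modulo $c_1$ does control the number of generalized terms: per pair $(\pi,C)$ one gets at most $\deg\sfP_j+1=\calo((N!)^2)$ terms (polynomial in $N!$, not "boundedly many"), whence $\rho=2^{\calo(N!)}$. But it does \emph{not} control $m(\sfS_N)$: merging two terms with the same residue and the same star means adding their coefficients in $K[\Sigma]$, and in $K[\Sigma]$ distinct monomials never collapse (idempotency kills only duplicates; the order is not linear on $K[\Sigma]$). The coefficient of $X^d$ in the prefix product $\bigotimes_{\gamma\in C}\sfP'_\gamma$ is a sum over all ways of distributing $d$ among the circuits of $C$, so the per-pair monomial count is genuinely exponential in $\#C$ --- the paper records it as $m(\sfP_j)=(N+N!-\alpha)^{\#\calc_N-1}$. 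The stated bound survives not because this blowup is polynomial but because $(\mrm{poly}\,N!)^{\#\calc_N}$ with $\#\calc_N=\calo((N-1)!\log N)$ is still absorbed into $2^{\calo(N!)}$, and there are only $2^{\calo(N!)}$ pairs $(\pi,C)$. Two smaller slips in the same spirit: the per-circuit prefix polynomial from~\eqref{eq-cyclic} has degree $c_1-\ell(\gamma)$ (its number of monomials is $c_1/\ell(\gamma)$), and the accumulated prefix degree per pair is $\calo(\#\calc_N\cdot c_1)$, not $\calo(N\cdot c_1)$, since $C$ may contain up to $\#\calc_N\sim(N-1)!\log N$ circuits; the conclusion $\kappa_1=\calo(N!)$ still holds, but only via $\#\calc_N=\calo(N!)$, not via your estimate.
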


\begin{proof}
We have:
$$\sfS_N  =\bigoplus_{\pi\in \calp_N, C\in \cala(\pi)} 
w(\pi) \bigotimes_{\gamma\in C}  w(\gamma)w(\gamma)^*
$$

\noindent
For every $\gamma$ (and $\pi$ also) the monomial $\sfP=w(\gamma)$ has
degree at most $N$ and is equal to $\sfq X^{\alpha}$ where $\sfq\in \sfK$ is
a monomial (i.e. $m(\sfq =1$) and
$\alpha \leq N$. Let $\alpha '$ be the integer such that $\alpha
\alpha ' = N!$. Using the identity~\eqref{eq-cyclic} we get:

\begin{eqnarray*}
\sfP(\sfq X^{\alpha})^* & = & \sfP (\unit \oplus \sfq X^{\alpha} \oplus \ldots
\oplus \sfq^{\alpha' -1} X^{\alpha(\alpha' -1)})(\sfq^{\alpha'} X^{N!})\\
& = & \sfP' (\sfq^{\alpha'} X^{N!})^*
\end{eqnarray*}

\noindent
where the polynomial $\sfP'$ has degree
$N+N!-\alpha=\mathcal{O}(N!)$ and $m(\sfP')=m(\sfq^{\alpha'})=1$. Using
the identity~\eqref{eq-1} we have immediately: 

$$\sfS_N = \bigoplus_{1\leq j\leq r} \sfP_j (\sfq_j X^{N!})^* $$

\noindent
where $\sfP_1, \ldots, \sfP_r\in \sfK[X]$, $\sfq_1, \ldots, \sfq_r\in \sfK$,
$m(\sfq_j) \leq \# \calc_N$ and $m(\sfP_j)=(N+N!-\alpha)^{\# \calc_N -1}$.  
To evaluate $r$ and the degrees of these polynomials, results from
\S\ref{s-c1} are useful: the degree of each $\sfP_j$ is
$\mathcal{O}((N!)^2)$, 
$r=2^{\mathcal{O}(N!)}$, $m(\sfq_j)=\mathcal{O}(N!)$ and
$m(\sfP_j)=2^{\mathcal{O}(N!)}$.  

Next step is to obtain an expression like~\eqref{ratu2} using the
identity~\eqref{ur}. Here is an example:

\begin{eqnarray*}
s & = & (0 \oplus X^3)(2 X^2)^* \oplus (2 \oplus X^4)(3 X^2)^*\\
& = & (2 X^2)^* \oplus X X^2 (2 X^2)^* \oplus 2 (3 X^2)^* \oplus
X^2 X^2 (3 X^2)^*\\  
& = & (0 \oplus 2 X^2 \oplus 4 X^4 (2 X^2)^*) \oplus X X^2 (0
\oplus 2 X^2(2 X^2)^*) \\
& & \oplus  
2(0 \oplus 3 X^2 \oplus 6 X^4 (3 X^2)^*) \oplus X^2 X^2 (0 \oplus
3 X^2(3 X^2)^*)\\  
& = & (2 \oplus 5 X^2 \oplus X^3\oplus X^4) \ \oplus  \ X^4 (4 (2
X^2)^* \oplus 2X(2 X^2)^* \oplus 8(3 X^2)^* \\
& & \oplus  3X^2 (3 X^2)^*)
\end{eqnarray*}

Let $c_1=N!$ and $\kappa_1$ be the smaller integer such that $c_1
\kappa_1+c_1-1$ is larger than the degrees of $P_j$. Then
$\kappa_1=\mathcal{O}(N!)$ and there are some polynomials $P_{j,0}, 
\ldots, P _{j,\kappa_1}$ of degrees at most $c_1-1$ such that:
$$
P_j= P_{j,0}+X^{c_1} P_{j,1}+X^{2c_1} P_{j,2}+ \ldots +X^{\kappa_1 c_1} P_{j,\kappa_1}
$$
Using~\eqref{ur} we have:

$$\begin{array}{l}
P_j(Q_j X^{c_1})^* = \\
P_{j,0} (\unit \oplus Q_j X^{c_1}\oplus \ldots \oplus
Q_j^{\kappa_1-1} X^{(\kappa_1-1){c_1}} 
\oplus (Q_j^k)_1 X^{\kappa_1 c_1} (Q_j X^{c_1})^* )\\
 \oplus  P_{j,1} X^{c_1}(\unit \oplus Q_j X^{c_1}\oplus \ldots \oplus
Q_j^{\kappa_1-2} X^{(\kappa_1-2)c_1} 
\oplus Q_j^{\kappa_1-1} X^{(\kappa_1-1)c_1} (Q_j X^{c_1})^* ) \\
  \oplus  \ldots  \\
 \oplus P_{j,\kappa_1} X^{\kappa_1 c_1} (Q_j X^{c_1})^* 
\end{array}$$
and thus

$$
P_j(Q_j X^{c_1})^* =R_j \oplus X^{\kappa_1 c_1} \Big(\bigoplus_{0 \leq i\leq
c_1(\kappa_1+1)} 
\sfu_{i,j}  X^{\mu_{i,j}} (Q_j X^{c_1})^*  \Big)
$$
where the degree of the polynomial $R_j$ is at most $\kappa_1 c_1$, the
$\sfu_{i,j}$'s are elements of $\sfK$, $\mu_{i,j}<c_1$, $m(R_j)$
and $m( \sfu_{i,j})$ are $2^{\mathcal{O}(N!)}$. 
At the end we have the equation~\eqref{eq-univ-expr}
where $0\leq \mu_i\leq c_1-1$,
$\sfu_i\in \sfK$, $\sfp\in \sfK[X]$ has degree
less than $\kappa_1 c_1$ and $\rho=c_1(\kappa_1+1)r=2^{\mathcal{O}(N!)}$.

\end{proof}

\begin{corollary}
\label{cor-bonnes-expressions}
Let $S$ be given by~\eqref{ratu3},
$\sfA\in \sfK^{N\times N}$,
 $\sfb\in \sfK^{N\times 1}$,
$\sfc\in \sfK^{1\times N}$,
and $\sfS_N=\sfc(\sfA X)^*\sfb$.
Then  we have

\begin{eqnarray*}
\sfS_N &=& \sfP_2 \oplus X^{k_2 c_2}
\Big(\bigoplus_{1\leq i\leq \rho_2} \sfv_i X^{\mu_i} (\sfr_i X^{c_2})^* 
\Big)\\
S &=& P_2 \oplus X^{k_2 c_2}
\Big(\bigoplus_{0\leq i\leq c_2-1} v_i X^{i} (r_i X^{c_2})^* 
\Big)
\label{last-eq-univ-expr}
\end{eqnarray*}
where $c_2=\lcm(N!, c)$, $k_2=\mathcal{O}(N!)$ and $k_2\leq \kappa_0$,
$\rho_2=c_0 2^{\mathcal{O}(N!)}$,
$0\leq \mu_i\leq c_2 -1$, $v_i \in K$, $\sfv_i\in \sfK$, $\sfP_2 \in
\sfK[X]$ and $P_2 \in K[X]$ have degree smaller than $k_2 c_2$,
$m(\sfS_N)=2^{\mathcal{O}(N!)}$.  
\end{corollary}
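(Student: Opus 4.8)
The plan is to read this corollary as the \emph{synchronization} step that feeds Proposition~\ref{prop-complex}: starting from the explicit star height one form of $\sfS_N$ supplied by Corollary~\ref{cor-univ-expr} and from the given form~\eqref{ratu3} of $S$, I would rewrite both series with one common period $c_2$ and one common transient length $k_2$. First I record the input data. Corollary~\ref{cor-univ-expr} presents $\sfS_N$ as in~\eqref{eq-univ-expr} with period $c_1=N!$, transient exponent $\kappa_1=\mathcal{O}(N!)$ (so transient degree $\kappa_1 c_1=\mathcal{O}((N!)^2)$), number of geometric terms $\rho=2^{\mathcal{O}(N!)}$, and $m(\sfS_N)=2^{\mathcal{O}(N!)}$; meanwhile $S$ is given by~\eqref{ratu3} with period $c=c_0$ and transient $\kappa_0$. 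The target is a pair of expressions sharing $c_2=\lcm(N!,c)$ and a single $k_2$.

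The first substantive step is period synchronization. Setting $c_2=\lcm(N!,c)$ and $d=c_2/c_1=c/\gcd(N!,c)\le c$, I would apply the cyclicity identity~\eqref{eq-cyclic} to every geometric block of $\sfS_N$, rewriting each $(\sfq_i X^{c_1})^*$ as $(\unit\oplus\cdots\oplus(\sfq_i X^{c_1})^{d-1})(\sfq_i^{\,d}X^{c_2})^*$ and expanding. This replaces each of the $\rho$ blocks by $d$ blocks of period $c_2$, so the term count grows to $d\cdot 2^{\mathcal{O}(N!)}=c_0\,2^{\mathcal{O}(N!)}$. The new star coefficient is now $\sfq_i^{\,d}$, and since $m(\sfq_i)=\mathcal{O}(N!)$ from Corollary~\ref{cor-univ-expr}, I would bound $m(\sfq_i^{\,d})$ by a count of degree-$d$ multisets over $\mathcal{O}(N!)$ monomials so as to keep $m$ within $2^{\mathcal{O}(N!)}$. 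The identity~\eqref{eq-cyclic} applied to $S$, with its period raised to $c_2$, leaves $S$ in the $c_2$-periodic shape of the second display, its coefficients $r_i,v_i$ still scalars.

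The second step is transient synchronization via~\eqref{ur}. Here I would first, if needed, enlarge the transient of $S$ by~\eqref{ur} so that $\kappa_0 c\ge\kappa_1 c_1$, and then set $k_2=\ceil{\max(\kappa_1 c_1,\kappa_0 c)/c_2}$. Since $\kappa_1 c_1=\mathcal{O}((N!)^2)$ and $c_2\ge N!$ this gives $k_2=\mathcal{O}(N!)$, and since $c_2\ge c$ it gives $k_2\le\kappa_0$, as claimed. Applying~\eqref{ur} to each geometric block of both series peels off its first $k_2$ powers into polynomials $\sfP_2$ and $P_2$ of degree $<k_2 c_2$, while leaving one block $\sfv_i\sfr_i^{\,k_2}X^{\mu_i}(\sfr_i X^{c_2})^*$ with $0\le\mu_i\le c_2-1$ per block, so the term count $\rho_2=c_0\,2^{\mathcal{O}(N!)}$ is unchanged. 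Collecting the estimates of \S\ref{s-c1} and Corollary~\ref{cor-univ-expr} then yields the announced values of $c_2$, $k_2$, $\rho_2$, the degree bounds on $\sfP_2,P_2$, and $m(\sfS_N)=2^{\mathcal{O}(N!)}$.

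The step I expect to be the main obstacle is the bookkeeping that ties the two synchronizations together, since three quantities must be controlled simultaneously: that raising the period from $N!$ to $c_2$ inflates $\rho$ by exactly the factor $c_2/N!\le c_0$; that the resulting star coefficients $\sfq_i^{\,d}$ do not push $m$ beyond $2^{\mathcal{O}(N!)}$; and that the common transient can be taken both $\mathcal{O}(N!)$ and $\le\kappa_0$. The most delicate of these is the interplay between $\gcd(N!,c)$ inside $d$ and the $\ceil{\cdot}$ inside $k_2$, which is precisely what reconciles the two bounds $k_2=\mathcal{O}(N!)$ and $k_2\le\kappa_0$.
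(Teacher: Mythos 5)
Your proposal is correct and takes essentially the same route as the paper's own proof: both arguments synchronize the two expressions to the common period $c_2=\lcm(N!,c)$ using identity~\eqref{eq-cyclic} and to a common transient offset $k_2c_2$ using identity~\eqref{ur}, with the same definitions of $c_2$ and $k_2$ (via ceilings of quotients) and the same bookkeeping for $\rho_2$, the degree bounds, and $m(\sfS_N)$. The only difference is the order of the two steps --- the paper aligns the transients first and then the periods, while you do the reverse --- which is immaterial to the argument.
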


\begin{proof}
Considering the equation~\eqref{eq-univ-expr} of
Corollary~\ref{cor-univ-expr} let $c_2=\lcm(c, 
c_1)=\alpha_0 c =\alpha_1 c_1$ and $k_2 =\max \{ \lceil \kappa_0 /
\alpha_0 \rceil, \lceil \kappa_1 /\alpha_1 \rceil \}$. Then we have two
integers $h_0$ and $h_1$ such that $k_2 c_2 = \kappa_0 c + h_0 c= \kappa_1 c_1 +
h_1 c_1$. Using the following equation 

$$
u_i X^i (q_i X^c)^* = u_i X^i (\unit \oplus q_i X^c \oplus \ldots \oplus
q_i^{h_0-1} X^{c(h_0-1)} \oplus q_i^{h_0} X^{c h_0} (q_i X^c)^*)
$$
we have

$$
S = P_2 \oplus X^{k_2 c_2}
\Big(\bigoplus_{0\leq i\leq c-1} u'_i X^{i} (q_i X^{c})^* \Big)
$$
with $u'_i= u_iq_i^{h_0}$ and
$
P_2= P \oplus X^{\kappa_0 c} (\bigoplus_{0\leq i\leq c-1} u_i X^{i})(\unit
\oplus q_i X^c \oplus \ldots \oplus q_i^{h_0-1} X^{c(h_0-1)})
$ is a polynomial of degree at most $k_2 c_2$. Similarly we have:

$$
\sfS_N = \sfP_2 \oplus X^{k_2 c_2}
\Big(\bigoplus_{0\leq \mu_i \leq \rho} \sfu'_i X^{\mu_i} (\sfq_i
X^{c_1})^* \Big) 
$$
where $\sfP_2$ is a polynomial of degree at most $k_2 c_2$ and
$\sfu'_i$ are elements of $\sfK$. The last step is to use the
equation~\eqref{eq-cyclic}

$$
(q_i X^{c})^*=(\unit \oplus q_i X^{c} \oplus \ldots \oplus
q_i^{\alpha_0 -1} X^{c(\alpha_0 -1)})(q_i^{\alpha_0} X^{c_2})^*
$$
which gives us 

\begin{eqnarray*}
S &=& P_2 \oplus X^{k_2 c_2}
\Big(\bigoplus_{0\leq i\leq c_2-1} v_i X^{i} (r_i X^{c_2})^* 
\Big)
\end{eqnarray*}
and similarly

\begin{eqnarray*}
\sfS_N &=& \sfP_2 \oplus X^{k_2 c_2}
\Big(\bigoplus_{1\leq i\leq \rho_2} \sfv_i X^{\mu_i} (\sfr_i X^{c_2})^* 
\Big)
\end{eqnarray*} 
where $r_i=q_i^{\alpha_0}$, $\sfr_i=\sfq_i^{\alpha_1}$,
$v_i=\sum_{j=0}^{\inf\{i, c-1\}} u'_j 
q_i^{i-j} \in K$, $\sfv_i \in \sfK$ and $\rho_2=\rho(1+c_1(\alpha_1
-1))=\rho c_0 \mathcal{O}(N!)$. The bound for $m(\sfS_N)$ follows
easily from Corollary~\ref{cor-univ-expr}.  
\end{proof}

We say that a quantity $Q$ depending of parameters
is {\em simply} (resp. {\em doubly},  {\em triply})
{\em exponential} if $Q$ can be bounded from above by a term
of the form $2^P$ (resp. $2^{2^{P}}, 2^{2^{2^P}}$),
where $P$ is a polynomial function of the parameters.

\begin{corollary}
Let
$S$ be given by~\eqref{ratu3}.
The set of realizations
of dimension $N$ of $S$ can be written as the union
of $n$ intersections of at most $k$ half-spaces, where
$n$ is triply exponential in $N$ and simply exponential in $\kappa,c$, and
$k$ is doubly exponential in $N$ and linear in $\kappa,c$.
In particular, when $K=\qmax$, the existence
of a realization of dimension $N$ of $S$ can be decided
in triply exponential time in $N$ and simply exponential time in
$\kappa,c$. 
\end{corollary}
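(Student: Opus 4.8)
The plan is to combine the normal-form result of Corollary~\ref{cor-bonnes-expressions} with the counting bound of Proposition~\ref{prop-complex}, and then to carry out the arithmetic of a three-level exponential tower. First I would invoke Corollary~\ref{cor-bonnes-expressions} to write both the universal series $\sfS_N=\sfc(\sfA X)^*\sfb$ and the target $S$ in the compatible forms~\eqref{ratu2} and~\eqref{ratu3}, sharing a common period $c_2=\lcm(N!,c)$ and a common transient length $k_2$; this is exactly the situation in which Proposition~\ref{prop-complex} applies to $\{\sfS_N=S\}$. Since the set of dimension-$N$ realizations of $S$ is precisely $\{\sfS_N=S\}$ (as observed in the deduction of Theorem~\ref{th-alg} from Theorem~\ref{th-2}), it then suffices to feed the parameters supplied by Corollary~\ref{cor-bonnes-expressions} into the symbol bound of Proposition~\ref{prop-complex}.

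Concretely, I would read off the substitutions $c\leftarrow c_2=\lcm(N!,c)$, $\kappa\leftarrow k_2=\mathcal{O}(N!)$ (with the transient contribution $k_2c_2$ linear in the input parameters $\kappa,c$), $\rho\leftarrow\rho_2=c\,2^{\mathcal{O}(N!)}$, and $m\leftarrow m(\sfS_N)=2^{\mathcal{O}(N!)}$, and substitute them into
\[
\{\sfS_N=S\}\in\big[\, m^{\kappa c+2c}\big(\textstyle\prod_{0\le i<c}\rho_i\big)2^{\rho-c},\ (m+1)\kappa c+2c+2\rho m\,\big].
\]
The key observation to extract is that $\rho$ is already \emph{doubly} exponential in $N$ (because $N!\le 2^{N^2}$ forces $2^{\mathcal{O}(N!)}$ to be doubly exponential), so the factor $2^{\rho}$ in the number-of-polyhedra bound is \emph{triply} exponential in $N$, while its exponent $\rho$ is only linear in $c$, making $2^\rho$ simply exponential in $c$. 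I would then verify that the remaining factors $m^{\kappa c+2c}$ and $\prod_{0\le i<c}\rho_i$ are each merely doubly exponential in $N$ and simply exponential in $\kappa,c$ — since $\log m=\mathcal{O}(N!)$ while $\kappa c=k_2c_2$ is simply exponential in $N$ and linear in $\kappa,c$ — so that they are absorbed into the triple exponential coming from $2^\rho$. This yields $n$ triply exponential in $N$ and simply exponential in $\kappa,c$. For the half-space count, the dominant term is $2\rho m=c\,2^{\mathcal{O}(N!)}$, doubly exponential in $N$ and linear in $c$; together with $(m+1)\kappa c$, which is doubly exponential in $N$ and linear in $\kappa,c$, this gives $k$ doubly exponential in $N$ and linear in $\kappa,c$.

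For the decision statement over $K=\qmax$, I would observe that each of the $n$ sets in the union is an ordinary polyhedron cut out by at most $k$ linear inequalities, since a half-space $\{m\ge m'\}$ is, in the conventional notation, a single affine inequality. Its non-emptiness is therefore testable by linear programming, as recalled after Corollary~\ref{cor1}. A realization of dimension $N$ exists if and only if at least one of these $n$ polyhedra is non-empty, and running one linear program per polyhedron costs time polynomial in $k$ and $N$ (the coefficients have doubly exponential bit-length, but this does not raise the level of the exponential tower). Summing over the $n$ polyhedra keeps the total triply exponential in $N$ and simply exponential in $\kappa,c$.

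I expect the only genuine difficulty here to be bookkeeping rather than ideas: all the structural work is already packaged in Corollaries~\ref{cor-univ-expr}--\ref{cor-bonnes-expressions} and Proposition~\ref{prop-complex}, so this final step is a disciplined estimation of nested exponentials. The one point requiring real care is isolating which factor of the number-of-polyhedra bound governs each variable — that $2^{\rho}$ \emph{alone} is responsible for the triple exponential in $N$, while $m^{\kappa c}$ and $\prod_i\rho_i$, despite involving the doubly exponential quantity $m$ and the potentially large period $c_2$, contribute only at the doubly exponential level in $N$ and only simply exponentially (through exponents polynomial) in $\kappa,c$.
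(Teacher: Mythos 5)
Your proof is correct and follows essentially the same route as the paper: the paper's own proof is exactly the combination of Corollary~\ref{cor-bonnes-expressions} and Proposition~\ref{prop-complex} (with Stirling's formula playing the role of your bound $N!\le 2^{N^2}$ in the tower arithmetic), followed by one linear program per polyhedron for the decision statement over $\qmax$. Your additional care in identifying $2^{\rho}$ as the sole source of the triple exponential, with $m^{\kappa c+2c}$ and $\prod_i\rho_i$ absorbed at the doubly exponential level, is precisely the bookkeeping the paper leaves implicit.
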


\begin{proof}
The first statement of the
corollary follows by applying Corollary~\ref{cor-bonnes-expressions},
Proposition~\ref{prop-complex} and using Stirling's formula.
The second statement follows
from the first one together with the fact
that linear programming has a polynomial time complexity
(see e.g.~\cite[Ch.~14 and 15]{schrijver}).
\end{proof}

\section{Conclusion}
We showed that the existence of a realization
of a given dimension of a max-plus linear sequence is decidable,
answering to a question which was raised from the beginning
of the development of the max-plus modelling of discrete event systems,
see~\cite{cohen85b-inria,olsder86b,BICOQ,olsder,bv}. 
This decidability result is obtained as a byproduct of a general structural
property: the set of realizations can be effectively written
as a finite union of polyhedra, or
as the max-plus analogue of a semi-algebraic set.
The complexity analysis leads
to a coarse triple
exponential bound, but it also
shows that some special structured instances of the problem can be 
solved in a more reasonable simply exponential time. A possible 
source of suboptimality of the present bound is that the underlying max-plus
semi-algebraic structure is not exploited: this raises issues of an
independent interest which we will examine further elsewhere.

\bibliography{algebraic}
\bibliographystyle{myalpha}
\end{document}